\newtheorem{theorem}{Theorem}
\newtheorem{proof}{Proof}
\begin{document}
	
	\title{Hybrid Satellite-UAV-Terrestrial Networks for 6G Ubiquitous Coverage: A Maritime Communications Perspective}

	\author{Yanmin Wang, 
		Wei~Feng,~\IEEEmembership{Senior Member,~IEEE,} 
		Jue Wang,~\IEEEmembership{Member,~IEEE,}\\ 
		and Tony Q. S. Quek,~\IEEEmembership{Fellow,~IEEE}
		\thanks{Y. Wang is with the China Academy of Electronics and Information Technology, Beijing 100041, China~(email: yanmin-226@163.com).
			
			W. Feng is with the Beijing National Research Center for Information Science and Technology, Department of Electronic Engineering, Tsinghua University, Beijing 100084, China~(email: fengwei@tsinghua.edu.cn).
			
			J. Wang is with the School of Information Science and
			Technology, Nantong University, Nantong 226019, China, also with the
			Peng Cheng Laboratory, Shenzhen 518066, China, and also with the Nantong Research Institute for Advanced Communication Technology, Nantong 226019, China (email: wangjue@ntu.edu.cn).
			
			T. Q. S. Quek is with the Information Systems Technology and Design
			Pillar, Singapore University of Technology and Design, Singapore 487372 (email: tonyquek@sutd.edu.sg).}

	}

	\maketitle
	\begin{abstract}
		In the coming smart ocean era, reliable and efficient communications are crucial for promoting a variety of maritime activities. 
		Current maritime communication networks (MCNs) mainly rely on marine satellites and on-shore base stations (BSs). The former generally provides limited transmission rate, while the latter lacks wide-area coverage capability. Due to these facts, the state-of-the-art MCN falls far behind terrestrial fifth-generation (5G) networks. To fill up the gap in the coming sixth-generation (6G) era, we explore the benefit of deployable BSs for maritime coverage enhancement. Both unmanned aerial vehicles (UAVs) and mobile vessels are used to configure deployable BSs. This leads to a hierarchical satellite-UAV-terrestrial network on the ocean. We address the joint link scheduling and rate adaptation problem for this hybrid network, to minimize the total energy consumption with quality of service (QoS) guarantees. Different from previous studies, we use only the large-scale channel state information (CSI), which is location-dependent and thus can be predicted through the position information of each UAV/vessel based on its specific trajectory/shipping lane. The problem is shown to be an NP-hard mixed integer nonlinear programming problem with
		a group of hidden non-linear equality constraints. We solve it suboptimally by using Min-Max transformation and iterative problem relaxation, leading to a process-oriented joint link scheduling and rate adaptation scheme. As observed by simulations, the scheme can provide agile on-demand coverage for all users with much reduced system overhead and a polynomial computation complexity.
		Moreover, it can achieve a prominent performance close to the optimal solution. 
		
	\end{abstract}
	
	\begin{IEEEkeywords}
		\emph{Channel state information (CSI), link scheduling, maritime communications, rate adaptation, unmanned aerial vehicle (UAV)} 	
	\end{IEEEkeywords}
	
	\IEEEpeerreviewmaketitle
	
	\section{Introduction}
	With the fast development of blue economy and the construction of smart ocean, maritime communications have attracted ever-increasing research attentions~\cite{p321,p323,p32}. Current maritime communication networks (MCNs) consist of two major component parts, namely the marine satellites (e.g., the Inmarsat) and the on-shore base stations (BSs). While the satellite solution can provide the most wide coverage, it suffers from inherent drawbacks such as the long transmission distance (and thus large delay), as well as limited and expensive bandwidth. On the other hand, due to the geographically limited site locations, the on-shore BSs may only cover a limited offshore area, where coverage holes are inevitable. These facts render the fixed infrastructures, e.g., satellites and on-shore BSs, not efficient and sufficient to fulfill the increasing wide-band-communication demands on the ocean. Actually, the current achieved date rate, not only on the ocean but also in the majority of remote rural areas~\cite{r1}, is way below that of the fifth-generation (5G) cellular network. To enable comparative communication qualities on the ocean with the urban areas, new dynamic network paradigm is desired for wide-band ubiquitous coverage in the coming sixth-generation (6G) era~\cite{r1}.
	
	For a dynamically deployed network, the BSs can be configured on mobile platforms, e.g., unmanned aerial vehicles (UAVs) and vessels, to provide on-demand services on the ocean~\cite{r2, r2_1}. Cooperating with the existing MCN, this leads to a hybrid satellite-UAV-terrestrial network, which has an irregular and dynamic network topology. Resource orchestration, such as link scheduling and rate adaption, now becomes much more complicated in such a hybrid network, considering the uninterrupted backhaul and coordination issues of the deployable BSs. 
	Facing these challenges, we will mathematically formulate a novel design problem, and provide an efficient solution for the hybrid satellite-UAV-terrestrial MCN.
	With the potential to offer enhanced on-demand communications, this hybrid dynamic network paradigm constitutes a competitive alternative to extend the 5G coverage to the oceanic area, and also to enable a ubiquitous coverage for future 6G networks~\cite{r1, r2}.

	\subsection{Related Works}	
    Some recent research efforts have been devoted to characterize the maritime channel and accordingly promote
    the transmission efficiency of on-shore BSs. Wang \textit{et al.}~\cite{r3} developed a comprehensive framework to investigate the near-sea-surface channel. It was observed that the maritime channel is significantly influenced by propagation environments such as sea wave movement and the ducting effect over the sea surface. Particularly, for the impact of sea waves, Huo \textit{et al.}~\cite{r4} investigated the transmission link quality based on the ocean wave modeling for coastal and oceanic waters. 
    As an interesting feature of maritime wireless channels, location-dependent large-scale fading dominates the channel condition. 
    Exploiting this feature, Liu \textit{et al.}~\cite{p404} proposed a hybrid precoding scheme for on-shore BSs, to increase the minimum rate of users, which relies on only large-scale channel state
    information (CSI). Wei \textit{et al.}~\cite{r6} proposed a resource allocation scheme for extending the coverage area of on-shore BSs, which utilizes the shipping lane information.    
    To avoid long-distance transmission for users on a ship, Kim \textit{et al.}~\cite{r5} proposed a hierarchical maritime radio network model, where users communicate with a cluster head equipped on the ship via Wi-Fi, and only the cluster head needs to communicate with on-shore BSs with long-distance transmission technologies. Recently, South Korea has launched a long-term evolution for maritime (LTE-Maritime) research project~\cite{r7}, which targets to cover key offshore areas using on-shore infrastructures. Despite these efforts, the coverage holes of on-shore BSs are inevitable due to their limited height and irregular site locations.   
    
    In addition to on-shore infrastructures, both mobile vessels and UAVs can be utilized to configure deployable BSs for maritime coverage enhancement. Yau \textit{et al.}~\cite{r8} envisioned a picture of multi-hop maritime network connecting various kinds of ships, buoys, and beacons. It was shown that the deployable infrastructure on the ocean is affected by sea surface movement, which may lead to frequent link breakages. 
    In contrast to the vessel-based BS, aerial BSs suffer less from the sea surface movement. For example,
    Teixeira \textit{et al.}~\cite{r9} considered multi-hop airborne communications on the ocean, where the height of tethered flying BSs was optimized to maximize the network capacity. On the other hand, maritime UAVs can 
    also be configured for deployable aerial BS, and they can be utilized for more agile deployment thanks to the high mobility.
    Li \textit{et al.}~\cite{ref_Li} optimized both the trajectory and resource allocation of the UAV, to serve a sailing ship in an accompanying way. Therein the coexistence issue between UAVs and satellites/on-shore BSs have been discussed for the harmonization of the links in the hybrid network. For effective maritime search and rescue, Yang \textit{et al.}~\cite{r10} deployed UAVs in a cognitive mobile computing network, the communication throughput of which was improved 
    with the help of reinforcement learning techniques.
    
    In general, vessels have fixed shipping lanes~\cite{p323}. Therefore, UAV-mounted BSs are more promising for on-demand maritime coverage enhancement. However, current research efforts on UAV communications mainly focus on the terrestrial scenario~\cite{r11,r12}. 
    Among the state-of-the-art research,
    Sun \textit{et al.}~\cite{r1-add} presented an optimal 3D aerial trajectory design and
    wireless resource allocation strategies for solar-powered UAV communication systems.
    Cai \textit{et al.}~\cite{r2-add} focused on the joint trajectory design and resource
    allocation problem for downlink energy-efficient UAV communication systems with eavesdroppers, 
    in which a series of interesting insights were obtained for secure UAV communications.
    While maritime UAVs face some similar challenges to the terrestrial case, e.g., 
    limited on-board energy, which should be elaborately scheduled~\cite{r2-add, r14-2,r13},
    and complicated cooperation pattern~\cite{r16,r17}, they have unique challenges. On the one hand, much more harsh maritime environments require more robust and stronger UAVs, e.g., the oil-powered fixed-wing UAV~\cite{r2}. On the other hand, the interaction between UAVs and other infrastructures becomes a touchy issue. Above the vast ocean, UAVs rely on on-shore BSs or vessel-based BSs for efficient backhaul, and rely on satellites for control signaling. This renders the hybrid network no longer a basic three-node relay model~\cite{r14,r15}. A hierarchical dynamic model is more suitable to characterize it, which however remains unveiled so far.      
    
    In a nutshell, the research on maritime deployable infrastructures is still in its beginning stage, and it is still an open problem to effectively integrate the fixed and deployable infrastructures for 
    maritime coverage enhancement.

	\subsection{Main Contribution}
  We consider a hybrid satellite-UAV-terrestrial MCN with practical constraints. The system consists of a satellite for global signaling coverage over the target area, an on-shore BS, multiple UAVs and multiple vessels. Among all the vessels, some act as deployable BSs and provide communication services for the rest. For the blind hole areas which cannot be reached by either the on-shore BS or the vessel-based BSs, UAVs are dispatched for coverage.
	This hybrid network contains shore-to-vessel (S2V), shore-to-UAV (S2U), vessel-to-UAV (V2U), UAV-to-UAV (U2U), UAV-to-vessel (U2V), and vessel-to-vessel (V2V) links. Our design objective is to minimize the total energy consumption of the entire network by coordinately orchestrating all these links across a certain service period, 
	while guaranteeing the quality of service (QoS) requirements of all users. 
		
	In this hybrid network, all maritime UAVs should keep away from extremely harsh environmental conditions, and most vessels follow fixed shipping lanes for collision prevention. These facts render it impossible to arbitrarily deploy the so-called deployable infrastructures.
	To fully exploit their benefits with these constraints, we should establish close cooperation among all these fixed and deployable infrastructures, making them relying heavily on each other. The hybrid network is thus irregular and indecomposable in general, which is the most important difference between it and existing studies. In this situation, the coupling relationship between the backhaul and fronthaul links of one deployable BS should be carefully satisfied. Otherwise, this BS would turn into a bottleneck of the whole network, leading to wide-band coverage holes. Besides, it becomes challenging to acquire full CSI for resource orchestration, due to limited system overheads, which have already been mostly occupied for handling the irregularity and dynamic coupling of the network.

	The main contributions of this paper are summarized as follows.    
	
	\begin{itemize}
		\item We establish a hierarchical framework for energy-efficient maritime coverage enhancement, by utilizing satellites for global signaling, on-shore BSs for sending source data, UAVs for on-demand filling of the blind spots of coverage, and some vessels for opportunistically relaying data. All these components are orchestrated in a process-oriented manner, by exploiting extrinsic information including the pre-designed trajectories of UAVs and the pre-planned shipping lanes of vessels. This framework may provide an enlightening case for efficiently and agilely integrating space-air-ground-sea fixed and deployable infrastructures. 
		
		\item Under this framework, we mathematically formulate a joint link scheduling and rate adaptation problem, to minimize the energy consumption of the whole hybrid network with QoS guarantees. The optimization problem uses the slowly-varying large-scale CSI only, which can be accurately predicted according to the position information of UAVs and vessels, under the practical composite channel models. The problem is shown to be an NP-hard mixed integer nonlinear programming (MINLP) problem with a group of hidden non-linear equality constraints.
		
		\item We propose an efficient iterative solution to the problem, by leveraging the relaxation and gradually approaching method based on the gentlest ascent principle, as well as the Min-Max transformation. Accordingly, an efficient process-oriented joint link scheduling and rate adaptation scheme is proposed, which has a polynomial computation complexity, and thus is viable for resource-limited practical applications. Moreover, simulation results show that it can achieve a prominent performance close to the optimal solution. This corroborates the effectiveness of process-oriented resource orchestration under the proposed framework. 
	\end{itemize}

   	Note that the optimization of UAV trajectories is not included in our work for the time being, although 
   	it is rather critical~\cite{ref_Li, r1-add, r2-add, r14-2,r13, r16,r17}. According to the divide-and-conquer principle, 
   	we tend to solve the resource orchestration problem first, paving the way for joint trajectory design and resource orchestration. This idea is to some extent necessary, 
   	as the optimization of UAV trajectories will further complicate the resource orchestration problem,
   	which itself is already found NP-hard under the considered practical conditions. Nevertheless, on the basis of some pioneering related works~\cite{r1-add, r2-add, r17},
   	we will try to take the joint optimization problem into account in our future work, 
   	so as to uncover its appealing potential gains in the maritime scenario.
    
	The rest of this paper is organized as follows.
	Section II introduces the system model. In Section III,  
	the joint link scheduling and rate adaptation problem is formulated to minimize the total energy consumption and
	an efficient process-oriented suboptimal solution is proposed.
	In Section IV, simulation results are presented with further discussions. Finally, concluding remarks are given in Section~V.

	\section{System Model} \label{section 2}

\begin{figure*}[!t]
	\centering
	\includegraphics[width=16cm]{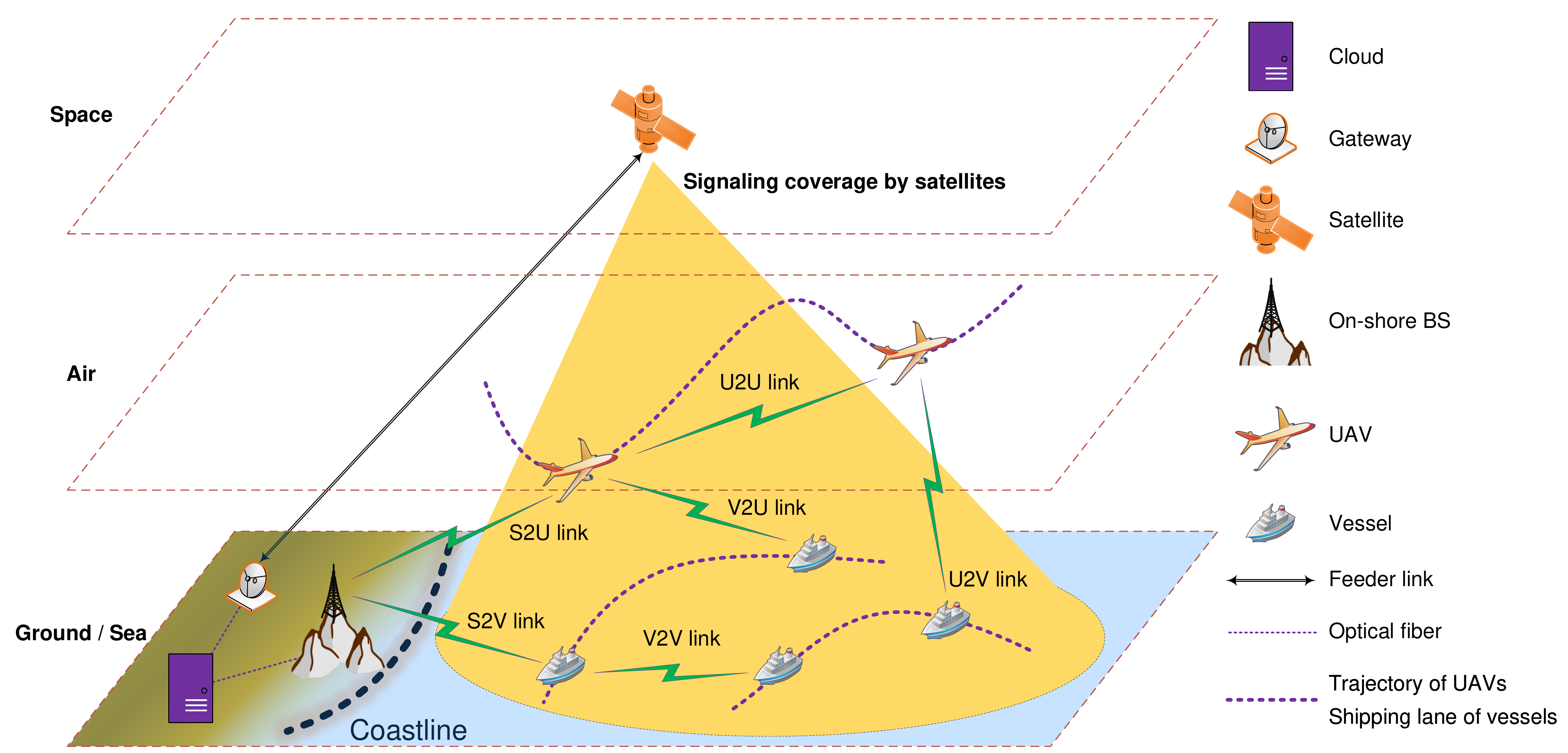}
	\caption{Illustration of a maritime hybrid network with S2V, S2U, V2U, U2U, U2V, and V2V links.}
	\label{fig1}
\end{figure*}	

As shown in Fig.~\ref{fig1}, we consider a hierarchical satellite-UAV-terrestrial MCN. 
The satellite provides signaling for the entire target area.\footnote{Geosynchronous-orbit (GSO) satellites are preferred for the signaling due to their significant superiority in wide-area coverage.
Besides, no harsh real-time signaling is needed for the MCN as the orchestration of the links is supposed to be conducted before the service begins,
which makes the network not that sensitive to the long transmission delay of GSO satellite systems.}
The on-shore BS covers the offshore area, and meanwhile provides backhaul for the UAVs and some of the vessels 
via the S2U links and S2V links, respectively.
The UAVs fly along pre-determined trajectories,\footnote{As illustrated above, the trajectory optimization of UAVs is out of the scope of this paper, which could be jointly considered with resource orchestration in the future work.}
and are utilized for coverage enhancement with the help of selectively established S2U, V2U, U2U, and U2V links.
All the vessels sail along specific shipping-lanes with fixed time tables within the entire service duration~\cite{p323}.
They receive data via the S2V, U2V, or V2V links, and some of them also help relay data for others
through the V2V and V2U links.
All the S2V, S2U, V2U, U2U, U2V, and V2V links are coordinately orchestrated through joint link scheduling and rate adaptation in the cloud, 
based on the large-scale CSI predicted from the pre-determined trajectories of UAVs and the shipping lanes of vessels.
The optimized scheduling and rate adaptation results are then distributed to the UAVs and the vessels for implementation
via the signaling channels provided by the satellite. 

Particularly, we consider one on-shore BS, $I$ UAVs, and $J$ single-antenna vessel users.
The UAVs receive data from the BS and then relay data to the vessels, the first $J'$ vessels with $J' \leq J$ can either
receive data from the BS and UAVs, or transmit data to the UAVs and other vessels.
The other $J-J'$ vessels only receive their own data.
For the ease of exposition, the indices of all transmitters are denoted as $i \in \left\{ {0,1,...,I+J'} \right\}$, and
for the receivers, the indices are $j \in \left\{ {1,...,I+J} \right\}$. 
Specifically, the transmitter with $i = 0$ corresponds to the on-shore BS, $1 \leq i \leq I$ means the transmitter is one of the $I$ UAVs,
and $I+1 \leq i \leq I+J'$ means the transmitter is one of the $J$ vessels that
helps to relay data for other vessels. We assume that there are $N$ ($N \leq I+J$) subcarriers shared by all the links, with subcarrier bandwidth ${B_s}$. 
All the coexisting transmission links will be allocated orthogonal subcarriers, and no co-channel interference exists. This orthogonal transmission assumption 
	enables us to be more focused on the key challenges of the hybrid network.	
	Note that it does not negate this work for scenarios with spectrum reuse,
	where the rise-over-thermal (ROT) parameters could be introduced to model the co-channel interference
    as a composition of the background noise~\cite{ROT-2018, ROT-2015}. 
$P_0$ represents the maximum transmit power of the on-shore BS on a subcarrier.
Similarly, $P_i$ represents the maximum transmit power of transmitter $1\leq i \leq  I + J'$, 
where $1 \leq i \leq I$ indicates that it is for a UAV, 
and $I + 1 \leq i \leq I + J'$ means it is for a vessel.

As shown in Fig.~\ref{fig2}, we divide the total serving time into $T$ time slots. 
Each time slot lasts a duration of $\Delta \tau$. 
For each vessel $j$, we consider a delay-tolerant service with required data volume $V_j^{\rm{QoS}}$. 
Data transmission to user $j$ is required to be completed before $t_j^{\rm{QoS}} \Delta \tau$, as shown in Fig.~\ref{fig2}.
Without loss of generality, 
we assume that different vessels have different desired data, and all the UAVs and vessels operate in a half-duplex mode.
For higher energy efficiency, the links among the BS, the UAVs, and the vessels are scheduled and the link rates 
are jointly adapted on the $N$ subcarries in the $T$ time slots,
with the target of minimizing the total energy consumption of the MCN. 
The joint link scheduling and rate adaptation is carried out based on large-scale CSI of the links, 
which can be predicted based on the pre-determined trajectories of the UAVs as well as 
the specific fixed shipping-lanes and time tables of the vessels.

\begin{figure}
	\centering
	\includegraphics[width=8cm]{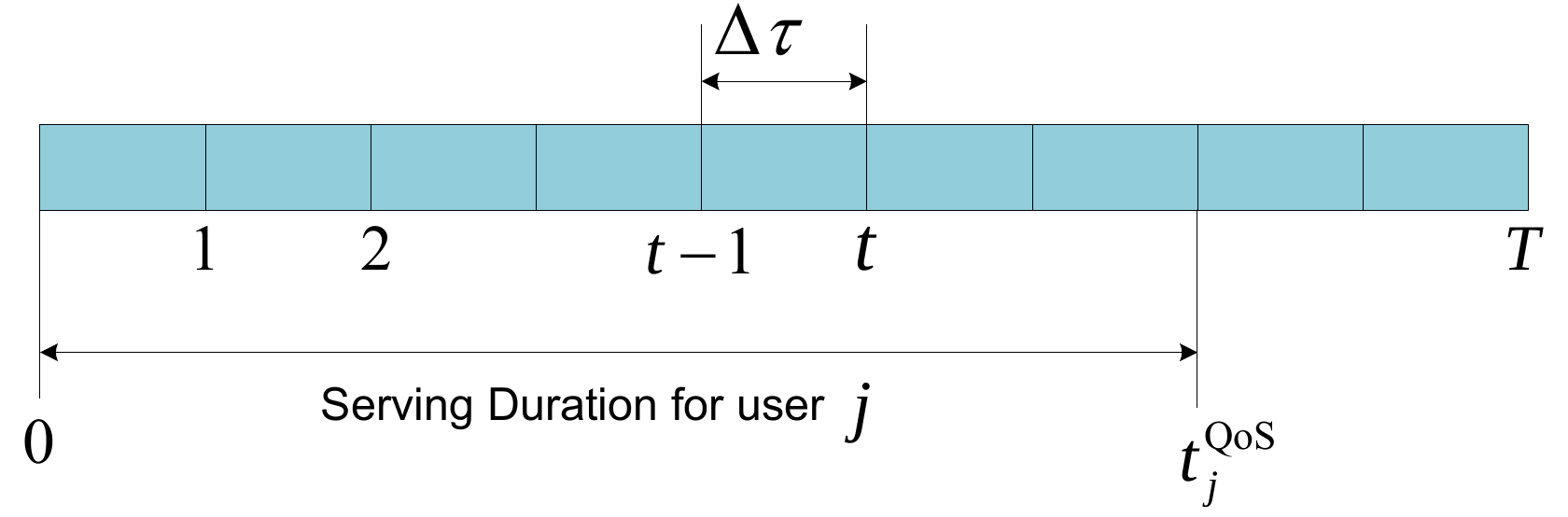}
	\caption{Time slot division and the serving time requirement of the vessels.}
	\label{fig2}
\end{figure}

For a given subcarrier, we denote the
composite channel gain from transmitter $i$ to receiver $j$ at time slot $t $ by $\sqrt {{\beta _{i,j,t }}} {h_{i,j,t }}$.
${h_{i,j,t }}$ denotes the small-scale fading, which  
follows a complex Gaussian distribution, i.e., ${h_{i,j,t }} \sim \mathcal{CN}(0, 1)$. 
It generally varies fast within each time slot.
${\beta _{i,j,t }}$ is the large-scale fading coefficient, which remains constant within each time 
slot.\footnote{Each of the UAVs is supposed to fly at a speed similar to that of the vessels, so that continuous coverage enhancement could be provided for the MCN.
Correspondingly, all ${\beta _{i,j,t }}$ have similar coherence time.}
When $ 1 \leq i \leq I, j=I+1,..., I+J $, or $i=0, I+1,..., I+J',  1 \leq j \leq I$, 
${\beta _{i,j,t }}$ corresponds to a U2V, S2U, or V2U link, respectively.
In this case, as the antenna height at one end of the link (the UAV end) is much higher than the other end, the path loss in dB
can be expressed as~\cite{ref_channel_1}
\begin{equation}\label{eq2_0}
	{\beta _{i,j,t }}[\text{dB}]=\frac{A}{1+ae^{-b(\rho_{i,j,t}-a)}}+B_{i,j,t},
\end{equation}
where
\begin{subequations}
	\begin{align}
		&\!\!\!\!A=\eta_{LOS}-\eta_{NLOS}, \\
		&\!\!\!\!B_{i,j,t}=20\log_{10}(d_{i,j,t})+20\log_{10}\big(\frac{4\pi f_c}{300}\big)+\eta_{NLOS}, \\
		&\!\!\!\!\rho_{i,j,t}=\frac{180}{\pi}\arcsin\big(\frac{h_u}{d_{i,j,t}}\big),
	\end{align}
\end{subequations}
with ${d_{i,j,t }}$ denoting the distance between transmitter $i$ and receiver $j$ at time slot $t $,
$h_u$ being the height of UAV, $f_c$ denoting the carrier frequency in MHz,
and $\eta_{LOS}, \eta_{NLOS}, a, b$ representing environment-related constant parameters. 
For a U2U link, if any exists, the free space path loss is assumed.
For $ i \in \{0, I+1,...,I+J' \} $ and $j \in \{ I+1, ..., I+J \}$, $i \neq j$, 
${\beta _{i,j,t }}$ corresponds to S2V link when $i = 0$, and it corresponds to a V2V link otherwise.
In this case, it is modeled as~\cite{r7}
\begin{equation}{\label{eq_2_1}}
	\begin{split}
		& {\beta _{i,j,t }}[\text{dB}] =(44.9-6.55\log_{10}h_i)\log_{10}\frac{d_{i,j,t}}{1000}+45.5 + \\
		& (35.46-1.1h_j)\log_{10}f_c-13.82\operatorname{log}_{10}h_j +0.7h_j+C,
	\end{split}
\end{equation}
where $h_i$ and $h_j$ denote the antenna heights of the transmitter and the receiver, respectively, 
and $C$ is a constant parameter indicating different propagation environments.
When the location information of the UAVs and vessels is known, 
the large-scale CSI for all the links, i.e., $\beta _{i,j,t }, \forall i,j,t, i \neq j$, can be directly obtained
via these path loss models.

\section{Joint Link Scheduling and Rate Adaptation}
In this section, we first formulate the joint link scheduling and rate adaptation problem for the hybrid MCN. The problem comes out to be an NP-hard MINLP problem. 
To solve it, we introduce
a process-oriented relaxation and gradually approaching method based on the gentlest ascent principle, as well as a Min-Max transformation.
Accordingly, an efficient process-oriented suboptimal joint link scheduling and rate adaptation scheme is proposed, 
with a detailed complexity analysis presented in the end of the section.

\subsection{Problem Formulation }
We denote the link from transmitter $i \in \left\{ {0,1,...,I+J'} \right\}$ to receiver $j \in \left\{ {1,...,I+J} \right\}$ at time slot $t$ by $i \to j@t$.
Let $\delta _{i,j,t} \in \left\{ {0,1} \right\}, i \neq j $, denote 
the scheduling indicator, where
$ {{\delta _{i,j,t}}} = 1$ means the link between transmitter $i$ to receiver $j$ is active at time slot $t$ 
on an allocated subcarrier,
while $ {{\delta _{i,j,t}}} = 0$ means the link is idle.
Note that the subcarrier identifier $n, n=1,...,N$, does not appear in the subscript of $ {{\delta _{i,j,t}}}$.
This is because the large-scale fading is assumed to be the same on different subcarriers
for each link $i \to j@t$, and it is not necessary to distinguish the specific identifier of the subcarrier 
allocated to a link.
Since the total number of subcarriers is $N$, we have\footnote{For simplicity, $i \neq j$ is omitted in (\ref{eq_3_1}) as well as all subsequent expressions.}
\begin{align}{\label{eq_3_1}}
	\sum\limits_{i=0}^{I+J'} {\sum\limits_{j=1}^{I+J} { {{\delta _{i,j,t}}} } } \le N,~~ t \in \{1,...,T\}.
\end{align}
Besides, $\delta_{i,j,t}$ is further constrained by the following inequations  for $\forall t$, since all 
the UAVs and vessels are half-duplex. 
\begin{subequations}  
	\begin{align}{\label{eq_3_2}}
		& \sum\limits_{i=0}^{I+J'} {{{\delta _{i,j,t}}} } + \sum\limits_{j'=1}^{I+J} { {{\delta _{j,j',t}}} \le {1}}, ~j\in \{1,...,I+J'\}, \\
		& \sum\limits_{i=0}^{I+J'} {{{\delta _{i,j,t}}} } \leq 1, ~~~~~~j \in \{I+J'+1,...,I+J\}.
	\end{align}
\end{subequations}

Denote the transmit power for link $i \to j @ t$ by $p_{i,j,t}$, 
where $p_{i,j,t} \leq P_i$ with $P_i$ being the maximum transmit power of transmitter $i$. 
Then the total energy consumption of the MCN can be written as
\begin{align}{\label{eq_3_2_1}}
	{E_{\rm{total}} \left( \{ \delta_{i,j,t} \}, \{p_{i,j,t} \} \right) = \sum\limits_{t = 1}^T { {\sum\limits_{i=0}^{I+J'}{\sum\limits_{j = 1}^{I+J} {{p_{i,j,t}}\delta _{i,j,t}} } \Delta \tau  }}}.
\end{align}
When the large-scale CSI, i.e., $\beta_{i,i,t}, \forall i,j,t$, is available,
the transmission rate of link $i \to j @ t$ can be derived as
\begin{align}{\label{eq_3_3}}
	{r_{i,j,t}} & = {{B_s}\mathbf{E}~{{\log }_2}\left( {1 + \frac{{{p_{i,j,t} }{\beta _{i,j,t }}{{\left| {{h_{i,j,t }}} \right|}^2}}}{{{\sigma ^2}}}} \right)},
\end{align}
where $\mathbf{E}$ denotes the expectation operator with respect to the unknown small-scale fading $h_{i,j,t }$,
and $\sigma ^2$ is the power of additive white Gaussian noise.
Based on the random matrix theory,
${r_{i,j,t}}$ can be accurately approximated by~\cite{JSAC2013}
\begin{align}{\label{eq_3_4}}
	r_{i,j,t} \approx &{B_s} \log_2 \left( 1+ \frac{ \beta _{i,j,t } p_{i,j,t} W_{i,j,t}^{-1} }{ \sigma ^2 }  \right) + B_s \log_2(W_{i,j,t}) \nonumber \\ 
	&- B_s \log_2 (e) \left( 1- W_{i,j,t}^{-1} \right),
\end{align}
with $W_{i,j,t}$ satisfying
\begin{align}{\label{eq_3_5}}
	W_{i,j,t} = 1 + \frac{ \beta _{i,j,t } p_{i,j,t} }{ \sigma ^2 + \beta _{i,j,t } p_{i,j,t} W_{i,j,t}^{-1} }.
\end{align}

In our considered network, the UAVs $i = 1, ..., I$, and vessels $i = I + 1, ..., I + J'$, may 
work as either transmitter or receiver in a certain slot $t$, 
depending on the link scheduling result. 
On the contrary, vessel $i$, $i \in \{I + J' + 1,..., I + J\}$, only receives its own demanded data, and does not transmit to the others. 
Suppose that at the end of the $t$-th slot, UAV or vessel $j$ has a total data volume of $V_{j,t}$.
Then $V_{j,t}$ is given by
\begin{equation}{\label{eq_3_8}}
	V_{j,t} = \left\{
	\begin{aligned}
		& \sum\limits_{i=0}^{I+J'} {{r_{i,j, t }\delta _{i,j,t}}} \Delta \tau,  ~~~t=1, \forall j, \\
		& \sum\limits_{\tau = 1}^t {\left( {\sum\limits_{i=0}^{I+J'} {{r_{i,j,\tau }\delta _{i,j,\tau}}} - \sum\limits_{j'=1}^{I+J} {{r_{j,j',\tau }\delta _{j,j',\tau}}} } \right)
			\Delta \tau }, \\
		&~~~~~~2 \leq t \leq T, 1\leq j \leq I+J', \\
		& \sum\limits_{\tau = 1}^t  {\sum\limits_{i=0}^{I+J'} {{r_{i,j,\tau }\delta _{i,j,\tau}}} }
		\Delta \tau, \\
		&~~~~~~2 \leq t \leq T, I+J'+1 \leq j \leq I+J.
	\end{aligned}
	\right.
\end{equation}
In order to ensure the causality of data forwarding in the network, 
the data that UAV or vessel $j$, $j \in \{ 1,...,I+J'\}$, transmits at time slot $t+1$
should be no more than that it has at the end of time slot $t$, i.e.,
\begin{align}{\label{eq_3_9}}
	&{V_{j,t}} \geq \sum\limits_{j'=1}^{I+J} {{r_{j,j',t+1 }\delta _{j,j',t+1}}} \Delta \tau, \\
	&j \in \{ 1,...,I+J' \}, t \in \{ 0,...,T-1 \}. \nonumber
\end{align}
Note that $V_{j,0}$ refers to the data volume that UAV or vessel $j$ has at the start of time slot $t=1$, and it satisfies
\begin{align}{\label{eq_3_9_0}}
	V_{j,0} = 0, j \in \{ 1,...,I+J \}.
\end{align}
It can be inferred from~(\ref{eq_3_9}) that
\begin{subequations}
\begin{align}{\label{eq_3_9_1}}
\delta_{j, j',1} = 0, j \in \{ 1,...,I+J' \}, j' \in \{ 1,...,I+J \}, \\
r_{j, j',1} = 0, j \in \{ 1,...,I+J' \}, j' \in \{ 1,...,I+J \}.
\end{align}
\end{subequations}
Furthermore, to satisfy the QoS guarantees for the vessels, it is desired that 
\begin{align}{\label{eq_3_10}}
	V_{j,t}|_{t \geq t_j^{\rm{QoS}}} \geq V_j^{\rm{QoS}}, j \in \{ I+1, ...,I+J \}.
\end{align}

Based on (\ref{eq_3_4}),  $p_{i,j,t}$ can be expressed as a function of $r_{i,j,t}$, i.e.,
\begin{align}{\label{eq_3_6_1}}
	p_{i,j,t} = \frac{ \sigma^2 }{ \beta_{i,j,t} } \left( 2^{  \frac{1}{B_s} r_{i,j,t}  + \log_2 (e) ( 1- W_{i,j,t}^{-1} )    } - W_{i,j,t} \right).
\end{align}
Thus, the total energy consumption 
$E_{\rm{total}} \left( \{ \delta_{i,j,t} \}, \{p_{i,j,t} \} \right)$ can be rewritten as
$E_{\rm{total}} \left( \{ \delta_{i,j,t} \}, \{r_{i,j,t} \} \right)$, which is shown in (16).
\newcounter{mytempeqncnt}
\begin{figure*}
	\normalsize
	\setcounter{mytempeqncnt}{\value{equation}}
	\setcounter{equation}{15}
	\begin{equation}{\label{eqn_dbl_x}}
		{E_{\rm{total}} \left( \{ \delta_{i,j,t} \}, \{r_{i,j,t} \} \right) = \sum\limits_{t = 1}^T { {\sum\limits_{i=0}^{I+J'}{\sum\limits_{j = 1}^{I+J} {   \frac{ \sigma^2  }{ \beta_{i,j,t}   } \left( 2^{  \frac{1}{B_s} r_{i,j,t} + \log_2 e ( 1- W_{i,j,t}^{-1} )    } - W_{i,j,t} \right)    } \delta _{i,j,t} \Delta \tau  }   }}}.	
	\end{equation}	
	\setcounter{equation}{\value{mytempeqncnt}}
	\hrulefill
\end{figure*}
Accordingly, the joint link scheduling and rate adaptation problem aiming to minimize 
the network energy consumption can be formulated as
\setcounter{equation}{16}
\begin{subequations}{\label{eq_3_7}}
	\begin{align}
		&\!\!\!\!\!\!\!\!\!\!\!\! \mathop {\min }\limits_{{\left\{ {{\delta _{i,j,t}}} \right\}, \{ r_{i,j,t} \} }}  { E_{\rm{total}} \left( \{ \delta_{i,j,t} \}, \{r_{i,j,t} \} \right) } \\
		{s.t.} \;\; &S^{\delta}_{j,t} \leq 1,~~ 1\leq j \leq I+J, 1\leq t \leq T, \\
		\;\;\;\;\;\; &\sum\limits_{i=0}^{I+J'} {\sum\limits_{j=1}^{I+J} { {{\delta _{i,j,t}}} } } \le N , ~~1\leq t \leq T, \\
		\;\;\;\;\;\; &\left. {{V_{j,t}}} \right|_{t \geq t_j^{{\rm{QoS}}}} \ge V_j^{\rm{QoS}}, ~~I+1 \leq j \leq I+J, \\
		\;\;\;\;\;\; &~ {V_{j,t}} \geq \sum\limits_{j'=1}^{I+J} {{r_{j,j',t+1 }\delta _{j,j',t+1}}} \Delta \tau, \\
		\;\;\;\;\;\; &~1\leq j \leq I+J', 0\leq t \leq T-1,\nonumber \\
		\;\;\;\;\;\; &~0 \leq r_{i,j,t} \leq R_{i,j,t}, \forall i, j, t, \\
		\;\;\;\;\;\; &~\delta _{i,j,t} \in \left\{ {0,1} \right\}, \forall i, j, t, \\
		\;\;\;\;\;\; &~W_{i,j,t} = 1 + \frac{ \beta _{i,j,t } p_{i,j,t} }{ \sigma ^2 + \beta _{i,j,t } p_{i,j,t} W_{i,j,t}^{-1} },\forall i, j, t,
	\end{align}
\end{subequations}
in which 
\begin{equation}{\label{eq_3_7_1}}
	S^{\delta}_{j,t} = \left\{
	\begin{aligned}
		& \sum\limits_{i=0}^{I+J'} {{{\delta _{i,j,t}}} } + \sum\limits_{j'=1}^{I+J} { {{\delta _{j,j',t}}} }, ~~j\in \{1,...,I+J'\}, \\
		& \sum\limits_{i=0}^{I+J'} {{{\delta _{i,j,t}}} }, ~~j \in \{I+J'+1,...,I+J\},
	\end{aligned}
	\right.
\end{equation}
and
\begin{align}{\label{eq_3_3_1}}
{R_{i,j,t}} =  {{B_s}\mathbf{E}~{{\log }_2}\left( {1 + \frac{{{P_i }{\beta _{i,j,t }}{{\left| {{h_{i,j,t }}} \right|}^2}}}{{{\sigma ^2}}}} \right)}.
\end{align}

It can be inferred from~(\ref{eq_3_7}) that with the joint link scheduling and rate adaptation optimization,
	the resultant energy consumption of the UAVs and that of the on-shore BS and the vessels 
	will be mainly decided by the qualities of the links. To tackle the irregularity and dynamic coupling of the network, 
	we target at minimizing the total energy consumption with QoS guarantees.   
	In some extreme cases, this may lead to a violation of the on-board energy constraints of some UAVs~\cite{r2-add, r14-2,r13}.
	To eliminate this risk, 
	one somewhat coarse but effective approach  
	is to adjust the maximum number of active links where the UAVs with limited on-board energy
	act as transmitters. Concretely,   
	when there is a limitation, i.e., $\bar{E}_{\bar{j}}$, on the energy consumption for UAV $\bar{j}, 1 \leq \bar{j} \leq I$, within the service period,
	an extra constraint $\sum_{t=1}^{T} \sum_{j'=1}^{I+J} \delta_{\bar{j}, j', t} \leq \bar{E}_{\bar{j}} / P_{\bar{j}} $ 
	could be added to~(\ref{eq_3_7}).
	In the following, it will be seen that the extra constraint makes no difference to the derivations of the proposed scheme.
	Nevertheless, more elaborate approaches for incorporating the energy limitation of UAVs could be further explored in the future work.

\subsection{Problem Analysis}
To solve the problem in~(\ref{eq_3_7}), we are confronted with two challenges. 
First, (\ref{eq_3_7}) is a MINLP problem, 
which is NP-hard according to~\cite{NP}. 
Second, the closed-form approximation for $r_{i,j,t}$ in~(\ref{eq_3_4}) 
brings a group of hidden non-linear equality constraints on the auxiliary variables $W_{i,j,t}$ and $r_{i,j,t}$,
as shown in~(\ref{eq_3_7}h). 

In the folllowing, we analyze (\ref{eq_3_7}) in allusion to these two challenges, 
and try to pave the way to a suboptimal but efficient solution.
Specifically, the relationship between $r_{i,j,t}$ and $\delta_{i,j,t}$
and that between $r_{i,j,t}$ and $W_{i,j,t}$ are analyzed and utilized.
According to the analysis, we show
that the integer scheduling indicators $\{ \delta_{i,j,t} \}$ are
able to be effectively processed with a merging and detaching strategy 
based on a process-oriented relaxation and gradually-approaching method,
and the non-linear equality constraints on $W_{i,j,t}$ can be delicately circumvented 
through a Min-Max transformation of the problem.

\emph{1) Relaxation and gradually approaching}

Note that $r_{i,j,t}$ and $\delta_{i,j,t}$ are 
closely related in the solutions for (\ref{eq_3_7}). For $\forall i,j,t$, the relation is described as
\begin{subequations}{\label{eq_3_11}}
	\begin{align}
		r_{i,j,t} = 0 \leftrightarrow \delta_{i,j,t} = 0 , \\
		r_{i,j,t} > 0 \leftrightarrow \delta_{i,j,t} = 1.
	\end{align} 
\end{subequations}
It means that if the scheduling indicator $\delta_{i,j,t}$ is $0$/$1$, i.e., the link $i \to j@t$ is set to be inactive/active, 
then the transmission rate on that link must be allocated a zero/non-zero value, and vise versa.
Based on this observation, $\delta_{i,j,t}$ can be relaxed by being merged into $r_{i,j,t}$.
The new problem after relaxation can be expressed as
\begin{subequations}{\label{eq_3_14}}
	\begin{align}
		&\!\!\!\!\!\!\!\!\!\!\!\! \mathop {\min }\limits_{{ \{ r_{i,j,t} \} }}  { \tilde{E}_{\rm{total}} \left( \{r_{i,j,t} \} \right) } \\
		{s.t.} \;\; & S^r_{j,t} \le 1, ~~1 \leq j\leq I+J, 1\leq t \leq T, \\
		\;\;\;\;\;\; &\sum\limits_{i=0}^{I+J'} {\sum\limits_{j=1}^{I+J} { \frac{{r_{i,j,t}}}{ R_{i,j,t} } } } \le N, ~~ 1 \leq t \leq T, \\
		\;\;\;\;\;\; &\left. {{\tilde{V}_{j,t}}} \right|_{t \geq t_j^{{\rm{QoS}}}} \ge V_j^{\rm{QoS}}, ~~I+1 \leq j \leq I+J, \\
		\;\;\;\;\;\; &~{\tilde{V}_{j,t}} \geq \sum\limits_{j'=1}^{I+J} {{r_{j,j',t+1 }}} \Delta \tau, \\
		\;\;\;\;\;\; &~1\leq j \leq I+J',   0\leq t \leq T-1, \nonumber\\
		\;\;\;\;\;\; &~0 \leq r_{i,j,t} \leq R_{i,j,t}, \forall i, j, t, \\
		\;\;\;\;\;\; &~W_{i,j,t} = 1 + \frac{ \beta _{i,j,t } p_{i,j,t} }{ \sigma ^2 + \beta _{i,j,t } p_{i,j,t} W_{i,j,t}^{-1} },\forall i, j, t,
	\end{align}
\end{subequations}
where $\tilde{E}_{\rm{total}} \left( \{r_{i,j,t} \} \right)$ is shown in (22),
\begin{figure*}
	\normalsize
	\setcounter{mytempeqncnt}{\value{equation}}
	\setcounter{equation}{21}
	\begin{equation}{\label{eqn_dbl_x}}
		{\tilde{E}_{\rm{total}} \left( \{r_{i,j,t} \} \right) = \sum\limits_{t = 1}^T { {\sum\limits_{i=0}^{I+J'}{\sum\limits_{j = 1}^{I+J} {   \frac{ \sigma^2  }{ \beta_{i,j,t}   } \left( 2^{  \frac{1}{B_s} r_{i,j,t} + \log_2 e ( 1- W_{i,j,t}^{-1} )    } -W_{i,j,t} \right)    }  \Delta \tau  }   }}}.	
	\end{equation}	
	\setcounter{equation}{\value{mytempeqncnt}}
	\hrulefill
\end{figure*}
\setcounter{equation}{22}
and 
\begin{equation}{\label{eq_3_14_1}}
	S^{r}_{j,t} = \left\{
	\begin{aligned}
		& \sum\limits_{i=0}^{I+J'} { \frac{r _{i,j,t}}{ R_{i,j,t} } } + \sum\limits_{j'=1}^{I+J} { \frac{r _{j,j',t}}{ R_{j,j',t} } }, ~~j\in \{1,...,I+J'\}, \\
		& \sum\limits_{i=0}^{I+J'} { \frac{r _{i,j,t}}{ R_{i,j,t} } }, ~~j \in \{I+J'+1,...,I+J\},
	\end{aligned}
	\right.
\end{equation}
\begin{equation}{\label{eq_3_15}}
	\tilde{V}_{j,t} = \left\{
	\begin{aligned}
		& \sum\limits_{i=0}^{I+J'} {{r_{i,j, t }}} \Delta \tau,  ~~~t=1, \forall j, \\
		& \sum\limits_{\tau = 1}^t {\left( {\sum\limits_{i=0}^{I+J'} {{r_{i,j,\tau }}} - \sum\limits_{j'=1}^{I+J} {{r_{j,j',\tau }}} } \right)
			\Delta \tau }, \\
		&~~~~~~2 \leq t \leq T, 1\leq j \leq I+J', \\
		& \sum\limits_{\tau = 1}^t  {\sum\limits_{i=0}^{I+J'} {{r_{i,j,\tau }}} }
		\Delta \tau , \\
		&~~~~~~2 \leq t \leq T, I+J'+1 \leq j \leq I+J.
	\end{aligned}
	\right.
\end{equation}
Note that as compared to the original problem (\ref{eq_3_7}), the constraints
(\ref{eq_3_7}b) and (\ref{eq_3_7}c) are now transformed into (\ref{eq_3_14}b)  and (\ref{eq_3_14}c), respectively, 
where the original constraints on 
$\delta_{i,j,t}$ is now implicitly expressed in terms of $r_{i,j,t}$ via the relations described in~(\ref{eq_3_11}).

Denote an optimal solution for (\ref{eq_3_14}) as $\{ \tilde{r}_{i,j,t} \}$,
and then the corresponding scheduling indicators, i.e.,  $\{ \tilde{\delta}_{i,j,t} \}$, 
can be detached from $\{ \tilde{r}_{i,j,t} \}$ according to (\ref{eq_3_11}).
Because of relaxation, 
the obtained $\{\tilde{\delta}_{i,j,t}\}$ might be not able to satisfy the original constraints described in 
(\ref{eq_3_7}b) and (\ref{eq_3_7}c),
as $\{ \tilde{r}_{i,j,t} \}$ and $\{ \tilde{\delta}_{i,j,t} \}$ are found from a larger solution space.  
However, it can be inferred that $\{ \tilde{\delta}_{i,j,t} \}$ indicates the links with \emph{the best qualities} in the network.
Thus, we may approach a solution for the original problem in~(\ref{eq_3_7})
based on $\{ \tilde{r}_{i,j,t} \}$ and $\{ \tilde{\delta}_{i,j,t} \}$, 
through gradually lessening the violations.
This can be achieved by shrinking the solution space of the problem~(\ref{eq_3_14}) gradually,
referring to that of~(\ref{eq_3_7}), with the help of $\{ \tilde{r}_{i,j,t} \}$ and $\{ \tilde{\delta}_{i,j,t} \}$.

To this end, we develop a process-oriented relaxation and gradually-approaching method
for the original problem~(\ref{eq_3_7}) in the following. 
Specifically, we first check the constraints in (\ref{eq_3_7}b) and (\ref{eq_3_7}c) with $\{ \tilde{\delta}_{i,j,t} \}$ and find those that are violated.
Then, a smaller solution space is derived for~(\ref{eq_3_14})  based on the gentlest-ascent principle,
in which the total energy consumption of the network always ascends the slowest.
Lastly, we
update $\{ \tilde{r}_{i,j,t} \}$ and $\{ \tilde{\delta}_{i,j,t} \}$ within the obtained smaller solution space,
aiming to lessen the violations of the constraints in (\ref{eq_3_7}b) and (\ref{eq_3_7}c).
The above operations are carried out iteratively, 
until all the constraints in (\ref{eq_3_7}b) and (\ref{eq_3_7}c) are satisfied.

Firstly, based on $\{ \tilde{\delta}_{i,j,t} \}$, we determine the violated constraints in (\ref{eq_3_7}b) and (\ref{eq_3_7}c),
and find all non-zero $\tilde{\delta}_{i,j,t}$ involved in the violated constraints.
For $t=1,...,T$ and $x=1,2$, define
\begin{align}{\label{eq_3_50}}
	\mathfrak{\Delta}_{t,x} = \{  (i^{(t,x)}_m, j^{(t,x)}_m, t) | m=1, ..., M_{t,x}  \}, 
\end{align}
where $(i^{(t,x)}_m, j^{(t,x)}_m, t)$ satisfies $\tilde{\delta}_{i^{(t,x)}_m, j^{(t)}_m, t} = 1$, 
and $\tilde{\delta}_{i^{(t,x)}_m, j^{(t,x)}_m, t}$ is involved in at least one of the violated constraints 
in (\ref{eq_3_7}b) for $x=1$ or (\ref{eq_3_7}c) for $x=2$.
By $\mathfrak{\Delta}_{t,1}$, $t=1,...,T$, and $\mathfrak{\Delta}_{t,2}$, $t=1,...,T$, 
the conflicting links $i \to j@t$ with respect to the half-duplex mode of the UAVs and vessels,
and those with respect to the constraint of the total number of available subcarriers, 
are respectively identified and grouped according to the time slot division. 
It can be inferred that
\begin{align}{\label{eq_3_51}}
	\!\!\! M_{t,x} \leq (I+J)^2.
\end{align}
Note that $\mathfrak{\Delta}_{t,x}, t=1,...,T$, are closely related 
due to the sequential characteristics of the data streaming on the links 
across the time slots $1$ to $T$.
The definition of $\mathfrak{\Delta}_{t,x}$, $t=1,...,T$, $x=1,2$, forms a basis for
the shrinking of the solution space of the relaxed problem~(\ref{eq_3_14})
towards that of the original problem~(\ref{eq_3_7}) following a process-oriented rule.

Secondly, with $\mathfrak{\Delta}_{t,x}$,$t=1,...,T$, $x=1,2$,
we find a smaller solution space for the relaxed problem~(\ref{eq_3_14})
based on the gentlest-ascent principle following a process-oriented rule.
To do this, we form $\bar{M}_x$ sets, i.e., $\bar{\mathfrak{\Delta}}_{x,\bar{m}}, \bar{m}=1,...,\bar{M}_x$, 
based on $ \mathfrak{\Delta}_{t,x}, t=1,...,T$, following a \emph{process-oriented} rule,
respectively for $x=1,2$. 
The specific \emph{process-oriented} rule is delicately designed and the details are illustrated in 
Subsection C.
Specially, for the process-oriented feature, 
$\bar{\mathfrak{\Delta}}_{x,\bar{m}}$ satisfies 
\begin{align}{\label{eq_3_52}}
	| \bar{\mathfrak{\Delta}}_{x,\bar{m}}  \cap \mathfrak{\Delta}_{t,x} | \geq 1,
\end{align}	
for all non-empty $\mathfrak{\Delta}_{t,x}, t=1,...,T$.
For each of the sets $\bar{\mathfrak{\Delta}}_{x,\bar{m}}, \bar{m}=1,...,\bar{M}_x, x=1,2$,
let $\{ \tilde{r}^{(x,\bar{m})}_{i,j,t} \}$ denotes  an optimal solution to the problem in (\ref{eq_3_14})
with a group of extra constraints
\begin{align}{\label{eq_3_53}}
	r_{i,j,t} = 0, ~~(i,j,t) \in \bar{\mathfrak{\Delta}}_{x,\bar{m}},
\end{align}	
which can be written as
\begin{subequations}{\label{eq_3_14_2}}
	\begin{align}
		&\!\!\!\!\!\!\!\!\!\!\!\! \mathop {\min }\limits_{{ \{ r_{i,j,t} \} }}  { \tilde{E}_{\rm{total}} \left( \{r_{i,j,t} \} \right) } \\
		{s.t.} \;\; & S^r_{j,t} \le 1, ~~1 \leq j\leq I+J, 1\leq t \leq T, \\
		\;\;\;\;\;\; &\sum\limits_{i=0}^{I+J'} {\sum\limits_{j=1}^{I+J} { \frac{{r_{i,j,t}}}{ R_{i,j,t} } } } \le N, ~~ 1 \leq t \leq T, \\
		\;\;\;\;\;\; &\left. {{\tilde{V}_{j,t}}} \right|_{t \geq t_j^{{\rm{QoS}}}} \ge V_j^{\rm{QoS}}, ~~I+1 \leq j \leq I+J, \\
		\;\;\;\;\;\; &~{\tilde{V}_{j,t}} \geq \sum\limits_{j'=1}^{I+J} {{r_{j,j',t+1 }}} \Delta \tau, \\
		\;\;\;\;\;\; &~1\leq j \leq I+J',  0\leq t \leq T-1, \nonumber\\
		\;\;\;\;\;\; &~0 \leq r_{i,j,t} \leq R_{i,j,t}, \forall i, j, t, \\
		\;\;\;\;\;\; &~W_{i,j,t} = 1 + \frac{ \beta _{i,j,t } p_{i,j,t} }{ \sigma ^2 + \beta _{i,j,t } p_{i,j,t} W_{i,j,t}^{-1} },\forall i, j, t, \\
		\;\;\;\;\;\; &~r_{i,j,t} = 0, ~~(i,j,t) \in \bar{\mathfrak{\Delta}}_{x,\bar{m}}.
	\end{align}
\end{subequations}
It can be inferred that~(\ref{eq_3_14_2}) is with a smaller solution space compared to~(\ref{eq_3_14}), and
\begin{align}{\label{eq_3_54}}
	\tilde{E}_{total} ( \{ \tilde{r}^{(x,\bar{m})}_{i,j,t} \} ) \geq \tilde{E}_{total} ( \{ \tilde{r}_{i,j,t} \} ).
\end{align}
We set
\begin{align}{\label{eq_3_55}}
\bar{m}_x' = \arg \min_{\bar{m} \in \{ 1,...,\bar{M}_x \}} \tilde{E}_{total} ( \{ \tilde{r}^{(x,\bar{m})}_{i,j,t} \} ).
\end{align}
Then the group of constraints in~(\ref{eq_3_53}) with $\bar{\mathfrak{\Delta}}_{x,\bar{m}} = \bar{\mathfrak{\Delta}}_{x,\bar{m}_x'}$,
together with those in~(\ref{eq_3_14}b)-(\ref{eq_3_14}g),
define a shrinked solution space for the relaxed problem~(\ref{eq_3_14}),
with which the total energy consumption of the network increases the least
compared to the solution spaces formed based on other groups of constraints in~(\ref{eq_3_53}).

Lastly, derive a new group of $\{ \tilde{r}_{i,j,t} \}$ and $\{ \tilde{\delta}_{i,j,t} \}$ based on the relaxed problem~(\ref{eq_3_14})
with the extra group of constraints in~(\ref{eq_3_53}) with $\bar{\mathfrak{\Delta}}_{x,\bar{m}} = \bar{\mathfrak{\Delta}}_{x,\bar{m}_x'}$.
Note that with the group of extra constraints defined by $\bar{\mathfrak{\Delta}}_{x,\bar{m}_x'}$,
the newly obtained $\{ \tilde{r}_{i,j,t} \}$ and $\{ \tilde{\delta}_{i,j,t} \}$,
i.e., the link scheduling and rate adaptation result,
are adjusted across all the time slots $t=1,...,T$ coordinately, i.e., following the process-oriented rule. 
Further,
check the constraints in (\ref{eq_3_7}b) and (\ref{eq_3_7}c).
If any violation is found, then repeat the solution-space-shrinking operation and 
the updating of $\{ \tilde{r}_{i,j,t} \}$ and $\{ \tilde{\delta}_{i,j,t} \}$.
Otherwise, the new group of $\{ \tilde{r}_{i,j,t} \}$ and $\{ \tilde{\delta}_{i,j,t} \}$
constitute a solution for the original problem in~(\ref{eq_3_7}).

\emph{2) Min-Max transformation for $W_{i,j,t}$}

In this part, we try to tackle the second challenge with problem~(\ref{eq_3_7}), 
i.e, the group of hidden non-linear equality constraints in (\ref{eq_3_7}h),
and find an efficient way to solve the relaxed problem in (\ref{eq_3_14}).
Fortunately and interestingly, 
it is found that~(\ref{eq_3_14}) can be transformed into a saddle-point problem
for a \emph{convex-concave} function with a group of linear inequality constraints

Define $f\left(  \{r_{i,j,t} \},  \{z_{i,j,t} \} \right)$ as shown in (32).
\begin{figure*}
	\normalsize
	\setcounter{mytempeqncnt}{\value{equation}}
	\setcounter{equation}{31}
	\begin{equation}{\label{eqn_dbl_x}}
		{f \left( \{r_{i,j,t} \}, \{z_{i,j,t} \} \right) = \sum\limits_{t = 1}^T { {\sum\limits_{i=0}^{I+J'}{\sum\limits_{j = 1}^{I+J} {   \frac{ \sigma^2  }{ \beta_{i,j,t}   } \left( 2^{  \frac{1}{B_s} r_{i,j,t} + \log_2 e ( 1- z_{i,j,t}^{-1} )    } - z_{i,j,t} \right)    }  \Delta \tau  }   }}}.	
	\end{equation}	
	\setcounter{equation}{\value{mytempeqncnt}}
	\hrulefill
\end{figure*}
\setcounter{equation}{32}
The following Theorem~\ref{theorem_min} shows that
the non-linear equality constraints on $W_{i,j,t}$ and $r_{i,j,t}$, $\forall i,j,t$, in 
(\ref{eq_3_7}h) and further (\ref{eq_3_14}g) can be
absorbed through a delicate Min-Max transformation of the problem.

\begin{theorem}\label{theorem_min} 
		For any $r_{i,j,t} \geq 0, \forall i,j,t $, $\tilde{E}_{\rm{total}} \left( \{r_{i,j,t} \} \right)$ can be expressed as
		\begin{eqnarray}\label{eq_3_16}
			\tilde{E}_{\rm{total}} \left( \{r_{i,j,t} \} \right) = \max_{z_{i,j,t} \geq 1, \forall i,j,t }  \,\, f\left(  \{r_{i,j,t} \},  \{z_{i,j,t} \} \right).
		\end{eqnarray}
		Moreover, $f\left(  \{r_{i,j,t} \},  \{z_{i,j,t} \} \right)$ is convex with respect to $\{ r_{i,j,t} \}$ 
		for $r_{i,j,t} \geq 0, \forall i,j,t$, 
		and concave with respect to $\{ z_{i,j,t} \}$ for $z_{i,j,t} \geq 1, \forall i,j,t$.

\end{theorem}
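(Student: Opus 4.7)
The plan is to exploit the full separability of both $f$ and the constraint set across the triples $(i,j,t)$: the summand of $f$ in (32) involves only the pair $(r_{i,j,t}, z_{i,j,t})$, and the feasible region $\{z_{i,j,t}\geq 1\}$ decouples as well. So the max in (33), and the two (separate) convexity/concavity claims, all reduce to a single-variable analysis of one summand
\[
g(r,z) \;=\; \frac{\sigma^{2}\Delta\tau}{\beta}\Bigl(2^{\,r/B_s}\,e^{\,1-z^{-1}} \;-\; z\Bigr),\qquad z\ge 1,\ r\ge 0,
\]
where I have used $2^{\log_2 e\,(1-z^{-1})}=e^{1-z^{-1}}$ to simplify. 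The overall theorem then follows by summing the per-summand statements.

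For the max identity, I would differentiate: $\partial g/\partial z = (\sigma^{2}\Delta\tau/\beta)\bigl(2^{r/B_s}\,e^{1-z^{-1}}\,z^{-2}-1\bigr)$, so any interior critical point satisfies $2^{r/B_s}\,e^{1-z^{-1}}=z^{2}$. The crux is to check that $z=W_{i,j,t}$ solves this. First, rearranging the implicit relation (7) gives $W^{2}-W=\beta p/\sigma^{2}$, i.e., $\beta p/\sigma^{2}=W(W-1)$. Plugging this into the closed form (14) for $p$ yields $2^{r/B_s}\,e^{1-W^{-1}}-W=W(W-1)$, equivalently $2^{r/B_s}\,e^{1-W^{-1}}=W^{2}$, which is precisely the first-order condition. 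Note $W\ge 1$ (since $W(W-1)\ge 0$ and $W>0$), so $W$ lies in the feasible set.

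To upgrade ``critical point'' to ``unique maximum'', I would compute
\[
\frac{\partial^{2} g}{\partial z^{2}} \;=\; \frac{\sigma^{2}\Delta\tau}{\beta}\,2^{r/B_s}\,e^{1-z^{-1}}\,z^{-3}\bigl(z^{-1}-2\bigr),
\]
which is strictly negative for $z\ge 1$ because $z^{-1}\le 1<2$. This simultaneously proves (i) strict concavity of $g$ in $z$ on $[1,\infty)$, so $z=W$ is indeed the unique maximizer and $\max_{z\ge 1} g(r,z)=g(r,W)$, matching the summand of (22), and (ii) the concavity half of the ``moreover'' clause. The convexity in $r$ is even easier: $\partial^{2}g/\partial r^{2} = (\sigma^{2}\Delta\tau/\beta)(\ln 2/B_s)^{2}\,2^{r/B_s}\,e^{1-z^{-1}}>0$. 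Joint convexity (resp.\ concavity) of $f$ in the vector $\{r_{i,j,t}\}$ (resp.\ $\{z_{i,j,t}\}$) follows because the Hessian is block-diagonal across indices, its diagonal blocks being the single-variable second derivatives above.

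The only genuinely substantive step is the algebraic bridge $2^{r/B_s}\,e^{1-W^{-1}}=W^{2}$, which relies on combining the implicit equation (7) with the $p$-$r$ relation (14); this is where the specific structure of the large-system rate approximation in (5)--(7) is used. Everything else is routine calculus and a sum-preserves-convexity argument. Care is needed at the boundary $r=0$, where $W=1$ so the maximizer sits on $\partial\{z\ge 1\}$, but strict concavity and the sign of $\partial g/\partial z|_{z=1}=(\sigma^{2}\Delta\tau/\beta)(2^{r/B_s}-1)\ge 0$ still pin down $z=W=1$ as the maximizer.
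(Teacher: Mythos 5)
Your proof is correct and follows essentially the same route as the paper's Appendix A: both reduce to the per-summand first-order condition in $z_{i,j,t}$, establish the key identity $2^{r_{i,j,t}/B_s}\,e^{1-W_{i,j,t}^{-1}} = W_{i,j,t}^{2}$ from the fixed-point equation for $W_{i,j,t}$, and use the negativity of $\partial^{2} f/\partial z_{i,j,t}^{2}$ on $z_{i,j,t}\geq 1$ to pin down the maximizer and obtain the concavity claim, with convexity in $r_{i,j,t}$ following from the exponential. The only differences are cosmetic: you derive the key identity via the inverted power--rate relation (14) combined with $\beta p/\sigma^{2}=W(W-1)$ rather than by substituting the rate formula (8) directly, and you add an explicit (and welcome) check of the boundary case $r_{i,j,t}=0$, $W_{i,j,t}=1$, which the paper leaves implicit.
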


\begin{proof}
	See Appendix A.
\end{proof}

Based on Theorem~\ref{theorem_min}, the problem in (\ref{eq_3_14}) can be equivalently recast as
the following Min-Max problem 
\begin{subequations}{\label{eq_3_17}}
	\begin{align}
		&\!\!\!\!\!\!\!\!\!\!\!\! \mathop {\min }\limits_{{ \{ r_{i,j,t} \} }} \max_{ \{ z_{i,j,t} \} }  { f\left(  \{r_{i,j,t} \},  \{z_{i,j,t} \} \right) } \\
		{s.t.} \;\; & S^r_{j,t} \le 1,  ~~1 \leq j\leq I+J, 1\leq t \leq T, \\
		\;\;\;\;\;\; &\sum\limits_{i=0}^{I+J'} {\sum\limits_{j=1}^{I+J} { \frac{{r_{i,j,t}}}{ R_{i,j,t} } } } \le N, ~~1\leq t \leq T, \\
		\;\;\;\;\;\; &\left. {{\tilde{V}_{j,t}}} \right|_{t \geq t_j^{{\rm{QoS}}}} \ge V_j^{\rm{QoS}}, ~~I+1 \leq j \leq I+J, \\
		\;\;\;\;\;\; &~{\tilde{V}_{j,t}} \geq \sum\limits_{j'=1}^{I+J} {{r_{j,j',t+1 }}} \Delta \tau, \\
		\;\;\;\;\;\; &~1\leq j \leq I+J', 0\leq t \leq T-1, \nonumber\\
		\;\;\;\;\;\; &~0 \leq r_{i,j,t} \leq R_{i,j,t}, \forall i, j, t, \\
		\;\;\;\;\;\; &~ z_{i,j,t} \geq 1, \forall i, j, t.
	\end{align}
\end{subequations}
With a \emph{convex-concave} objective function
and a group of linear inequality constraints, 
(\ref{eq_3_17}) can be solved via computations with a polynomial complexity~\cite{SaddlePoint_1, SaddlePoint_2, SaddlePoint_3}. 	
Its optimal solution is a saddle point
of $f\left(  \{r_{i,j,t} \},  \{z_{i,j,t} \} \right)$ within the convex set
determined by the constraints (\ref{eq_3_17}b)-(\ref{eq_3_17}g).
Accordingly, an optimal solution can be obtained for the problem in (\ref{eq_3_14}). 

\begin{figure}[t]
	\centering
	\includegraphics[width=8cm]{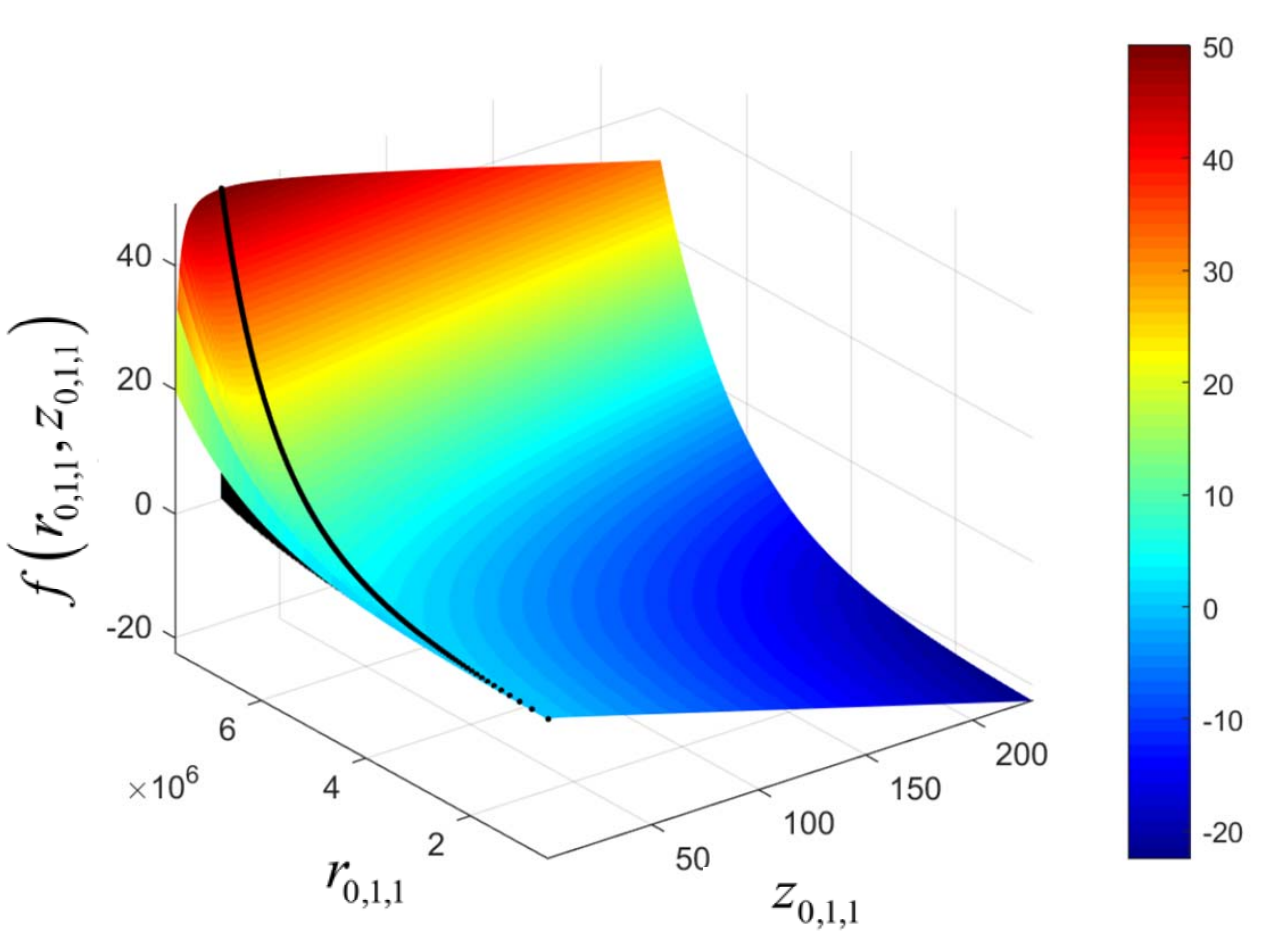}
	\caption{Variation of $f\left(  r_{0,1,1} ,  z_{0,1,1} \right)$ with respect to $[r_{0,1,1} ,  z_{0,1,1}]$ .}
	\label{fig_function_1}
\end{figure}

\begin{figure}[t]
	\centering
	\includegraphics[width=8cm]{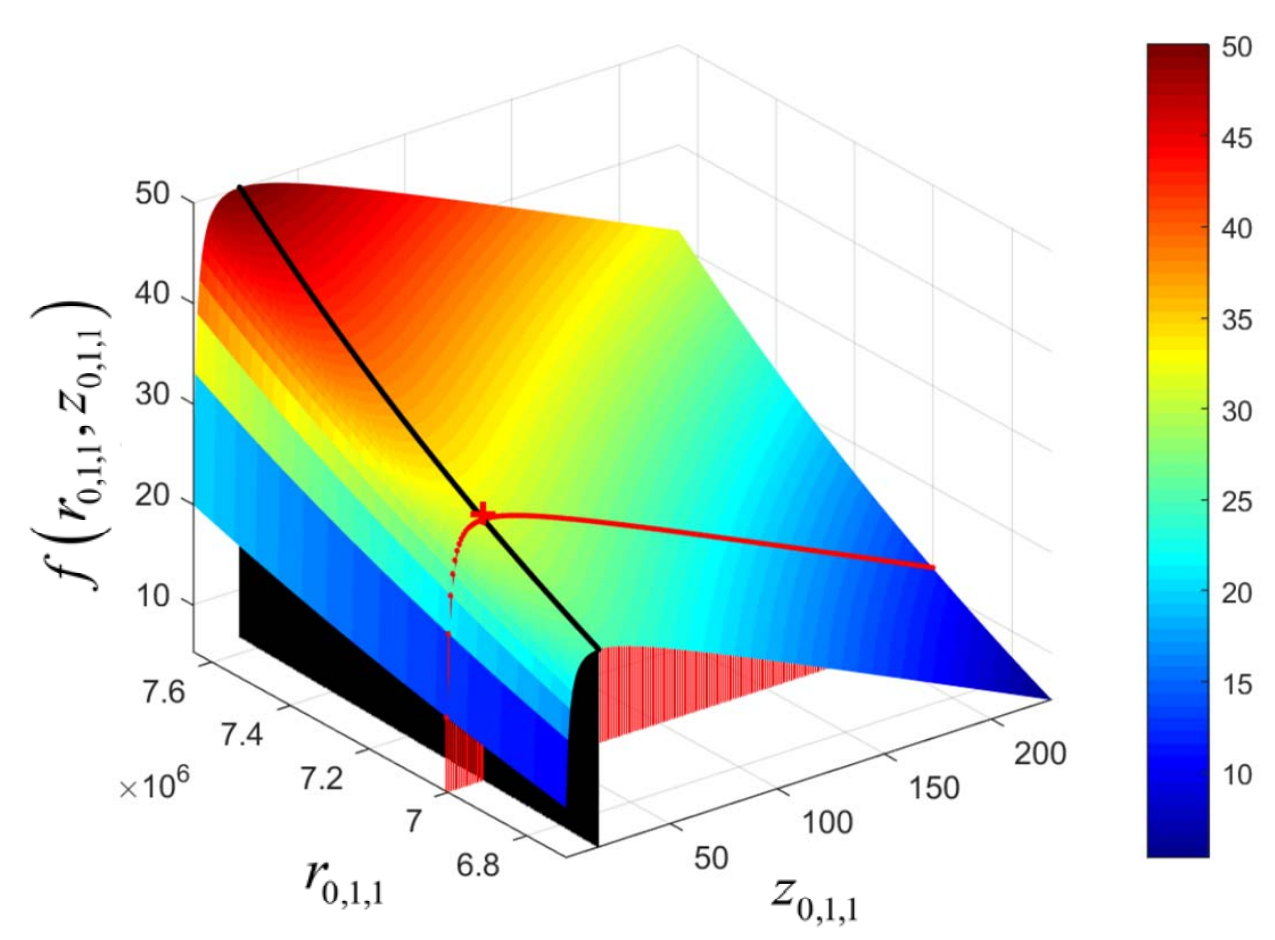}
	\caption{Illustration of $f\left(  r_{0,1,1} ,  z_{0,1,1} \right)$ with more details.}
	\label{fig_function_2}
\end{figure}

To show the characteristics of $f\left(  \{r_{i,j,t} \},  \{z_{i,j,t} \} \right)$ more clearly,
we consider a simplified network scenario 
with $I=0,J=1,T=1,N=1$, $\Delta t = 1$s, $B_s = 1$MHz, $h_0 = 30$m, $h_1 = 5$m, $d_{0,1,1}=100$m, $f_c = 2$GHz, $C=1$,
$P_0 = 50$W, and $\sigma^2 = -114$dBm.
With $p_{0,1,1} = P_0$, we get $R_{0,1,1} = 7.65$Mbps.
In this scenario, the variation of $f\left(  r_{0,1,1} ,  z_{0,1,1} \right)$ with respect to 
$[r_{0,1,1} ,  z_{0,1,1}]$ within 
$[0, R_{0,1,1}] \times [1,+\infty)]$ is shown in Fig.~\ref{fig_function_1}.
For more details, $f\left(  r_{0,1,1} ,  z_{0,1,1} \right)$ 
within 
$[7/8R_{0,1,1}, R_{0,1,1}] \times [1,+\infty)$
is further illustrated in Fig.~\ref{fig_function_2}.
The black line on the surface in both Fig.~\ref{fig_function_1} and Fig.~\ref{fig_function_2} marks all the points with $z_{0,1,1}=W_{0,1,1}$,
and the red \textbf{+} marker on the surface in Fig.~\ref{fig_function_2} indicates  
the saddle point of $f\left(  r_{0,1,1} ,  z_{0,1,1} \right)$
with constraints $ r_{0,1,1} \geq 10/11 R_{0,1,1} $.

\subsection{Suboptimal Joint Link Scheduling and Rate Adaptation Scheme}
Based on the analysis in Subsection B, we propose a suboptimal joint link scheduling and rate adaptation scheme for the 
hybrid satellite-UAV-terrestrial MCN.

The details of the proposed scheme is illustrated in Algorithm~\ref{Iterative_scheme}.
Note that the solution space for the relaxed problem in~(\ref{eq_3_14})
is shrinked with the help of $\mathfrak{\Delta}_{t,1}, t=1,...,T$, first,
which eliminate the violations of (\ref{eq_3_7}b), 
and then $\mathfrak{\Delta}_{t,2}, t=1,...,T$, which eliminate that of (\ref{eq_3_7}c).

For $x=1,2$,
in each iteration $s_x$, $\bar{M}_x = M_{\bar{T}_x,x}$ sets, i.e., $ \bar{\mathfrak{\Delta}}_{x,\bar{m}}, \bar{m}=1,...,M_{\bar{T}_x,x}$,
are formed based on $\left[ \mathfrak{\Delta}_{t,x} \right]^{s_x}, t=1,...,T$.
$\bar{T}_x$ is the biggest $t \in \{ 1,...,T \}$ with $\left[ \mathfrak{\Delta}_{t,x} \right]^{s_x} \neq \emptyset $,
and $M_{\bar{T}_x,x}$ is the size of the set $\left[ \mathfrak{\Delta}_{\bar{T}_x,x} \right]^{s_x}$.
More specifically, $ \bar{\mathfrak{\Delta}}_{x,\bar{m}}, \bar{m}=1,...,M_{\bar{T}_x,x}$, are derived 
following the \emph{process-oriented} rule
defined by (\ref{eq_30})-(\ref{eq_34}), as
\begin{align}{\label{eq_30}}
	\bar{\mathfrak{\Delta}}_{x,\bar{m}} = \cup_{t=1}^{\bar{T}_x} \bar{\mathfrak{\Delta}}'_{x,\bar{m},t},
\end{align}
where 	
\begin{subequations} {\label{eq_30_1}}
	\begin{align}
		&\bar{\mathfrak{\Delta}}'_{x,\bar{m},t} = \cup_{k=1}^{4} \hat{\mathfrak{\Delta}}^{(k)}_{x,\bar{m},t},~~~~~~~~~~~~~~ x=1, \\
		&\bar{\mathfrak{\Delta}}'_{x,\bar{m},t} = \{  (i^{(t,x)}_{\hat{m}_{x,\bar{m},t}}, j^{(t,x)}_{\hat{m}_{x,\bar{m},t}},t)  \}, ~~ x=2,
	\end{align}
\end{subequations}
with\footnote{ Note that some elements in the $\hat{\mathfrak{\Delta}}^{(k)}_{x,\bar{m},t}$, $k=1,...,4$, 
	given in~(\ref{eq_31}a)-(\ref{eq_31}d) may be invalid in our considered network, as the BS only transmits and
	the vessels $I+J'+1,..., I+J$ only receives. When an invalid element appears, we could just ignore it.
	It does not affect the effectiveness of the expression in~(\ref{eq_31}a)-(\ref{eq_31}d).} 
\begin{subequations}{\label{eq_31}}
	\begin{align}
		&\!\!\!\!  \hat{\mathfrak{\Delta}}^{(1)}_{x,\bar{m},t} = \nonumber \\
		&\!\!\!\!\{ (i,j,t) | i=i^{(t,x)}_{\hat{m}_{x,\bar{m},t}}; j\in\{1,...,I+J\}, j \neq j^{(t,x)}_{\hat{m}_{x,\bar{m},t}} \}, \\
		&\!\!\!\! \hat{\mathfrak{\Delta}}^{(2)}_{x,\bar{m},t} = \{ (i,j,t) | i \in \{0,...,I+J'\}; j=i^{(t,x)}_{\hat{m}_{x,\bar{m},t}} \}, \\
		&\!\!\!\! \hat{\mathfrak{\Delta}}^{(3)}_{x,\bar{m},t}= \nonumber \\
		&\!\!\!\! \{ (i,j,t) | i\in\{0,...,I+J'\}, i \neq i^{(t,x)}_{\hat{m}_{x,\bar{m},t}}; j=j^{(t,x)}_{\hat{m}_{x,\bar{m},t}} \}, \\
		&\!\!\!\! \hat{\mathfrak{\Delta}}^{(4)}_{x,\bar{m},t} = \{ (i,j,t) |  i = j^{(t,x)}_{\hat{m}_{x,\bar{m},t}}; j \in \{1,...,I+J\} \}.
	\end{align} 
\end{subequations}
In~(\ref{eq_30_1}b) and (\ref{eq_31}a)-(\ref{eq_31}d),
$(i^{(t,x)}_{\hat{m}_{x,\bar{m},t}}, j^{(t,x)}_{\hat{m}_{x,\bar{m},t}},t)$ is the $\hat{m}_{x,\bar{m},t}$th element of
the non-empty set $\left[ \mathfrak{\Delta}_{t,x} \right]^{s_x}$.\footnote{If $\left[ \mathfrak{\Delta}_{t,x} \right]^{s_x} = \emptyset$,
	we could just ignore the corresponding $\bar{\mathfrak{\Delta}}'_{x,\bar{m},t}$ in (\ref{eq_30})
	as well as all subsequent relative expressions.}
When $t= \bar{T}_x$, 
\begin{align}{\label{eq_30_2}}
	\hat{m}_{x,\bar{m},t} = \bar{m}, ~~\bar{m} = 1,...,M_{\bar{T}_x,x},
\end{align}
and for $1 \leq t \leq \bar{T}_x-1$, $\hat{m}_{x,\bar{m},t}$ is determined by
\begin{align}{\label{eq_32}}
	\hat{m}_{x,\bar{m},t} = \arg \min_{m \in \{ 1,...,M_{t,x} \}}  \tilde{E}_{total} ( \{ \hat{r}^{(x,m,\bar{m})}_{i,j,t} \} ),
\end{align}
where $M_{t,x}$ is the size of the set $\left[ \mathfrak{\Delta}_{t,x} \right]^{s_x}$, and
$\{ \hat{r}^{(x,m,\bar{m})}_{i,j,t} \} $ is the solution to (\ref{eq_3_17}) with the following extra constraints as
\begin{align}{\label{eq_33}}
	r_{i,j,t} = 0, ~(i,j,t) \in \bar{\mathfrak{\Delta}}''_{x,m,t} \cup_{t=t+1}^{\bar{T}_x} \bar{\mathfrak{\Delta}}'_{x,\bar{m},t} \cup  \left[ \bar{\mathfrak{\Delta}} \right]^{s_x},
\end{align}
with 
\begin{subequations}{\label{eq_34}}
	\begin{align}
		& \bar{\mathfrak{\Delta}}''_{x,m,t} = \cup_{k=1}^{4} \hat{\mathfrak{\Delta}}'^{(k)}_{x,m,t},~~x=1, \\
		& \bar{\mathfrak{\Delta}}''_{x,m,t} = \{  (i^{(t,x)}_m, j^{(t,x)}_m, t) \}, ~~x=2,
	\end{align}
\end{subequations}
in which $ (i^{(t,x)}_m, j^{(t,x)}_m, t)$ is the $m$th element in $\left[ \mathfrak{\Delta}_{t,x} \right]^{s_x}$,
and $\hat{\mathfrak{\Delta}}'^{(k)}_{x,m,t}, k=1,...,4$, can be respectively obtained from~(\ref{eq_31}a)-(\ref{eq_31}d)  by 
replacing $(i^{(t,x)}_{\hat{m}_{x,\bar{m},t}}, j^{(t,x)}_{\hat{m}_{x,\bar{m},t}},t)$
with $ (i^{(t,x)}_m, j^{(t,x)}_m, t)$.

\begin{algorithm}[t]
	\caption{Suboptimal joint link scheduling and rate adaptation scheme.}
	\begin{algorithmic}[1]
		\label{Iterative_scheme}
		\STATE Solve the problem in (\ref{eq_3_17}), and denote the solution as  ${\tiny \{ \tilde{r}_{i,j,t} \} }$. Detach ${\tiny \{ \tilde{\delta}_{i,j,t} \} }$ from ${\tiny \{ \tilde{r}_{i,j,t} \} }$ based on (\ref{eq_3_11}). 
		\STATE With $\{ \tilde{\delta}_{i,j,t} \}$, initialize 
		$\left[ \mathfrak{\Delta}_{t,1} \right]^0, t=1,...,T $, 
		based on (\ref{eq_3_50}).
		\STATE Set
		$\left[ \bar{\mathfrak{\Delta}} \right]^0 = \emptyset $, and $s_1 = 0$.
		
		\FOR{x=1,2}
		
		\WHILE{$ \cup_{t=1}^{T} \left[ \mathfrak{\Delta}_{t,x} \right]^{s_x} \neq \emptyset $ }
		\STATE Determine $\bar{T}_x$ based on $\left[ \mathfrak{\Delta}_{t,x} \right]^{s_x}, t=1,...,T$, by finding the biggest $t \in \{1,...,T\}$ with $\left[ \mathfrak{\Delta}_{t,x} \right]^{s_x} \neq \emptyset $.
		
		\STATE Derive $ \bar{\mathfrak{\Delta}}_{x,\bar{m}}, \bar{m}=1,...,M_{\bar{T}_x,x}$, based on 
		$\left[ \mathfrak{\Delta}_{t,x} \right]^{s_x}, t=1,...,T$, 
		according to  (\ref{eq_30})-(\ref{eq_34}), through iteratively solving (\ref{eq_3_17}) with the extra constraints 
		given in (\ref{eq_33}).
		\STATE Set $ \bar{\mathfrak{\Delta}}_{x,\bar{m}} ~:=~ \bar{\mathfrak{\Delta}}_{x,\bar{m}} \cup  \left[ \bar{\mathfrak{\Delta}} \right]^{s_x }, \bar{m}=1,...,M_{\bar{T}_x,x}$.
		\STATE Calculate $\{ \tilde{r}^{(x,\bar{m})}_{i,j,t} \}$, $\bar{m}=1,...,M_{\bar{T}_x,x}$, by solving (\ref{eq_3_17}) with the extra constraints in (\ref{eq_3_53}).
		\STATE Find $\bar{m}_x' = \arg \min_{\bar{m} \in \{ 1,...,M_{\bar{T}_x,x} \}} \tilde{E}_{total} ( \{ \tilde{r}^{(x,\bar{m})}_{i,j,t} \} )$.
		\STATE Set $\left[ \bar{\mathfrak{\Delta}} \right]^{s_x+1} =  \bar{\mathfrak{\Delta}}_{ x, \bar{m}_x'}$.
		\STATE Update $\{ \tilde{r}_{i,j,t} \}$ as the solution of (\ref{eq_3_17}) with a group of extra constraints as
		$r_{i,j,t}=0, (i,j,t) \in \left[ \bar{\mathfrak{\Delta}} \right]^{s_x+1}$, 
		and update the corresponding $\{ \tilde{\delta}_{i,j,t} \} $ based on (\ref{eq_3_11}).
		\STATE Derive  $\left[ \mathfrak{\Delta}_{t,x} \right]^{s_x+1}, t=1,...,T $, 
		based on $\{ \tilde{\delta}_{i,j,t} \} $ according to (\ref{eq_3_50}).
		\STATE $s_x := s_x + 1$. 
		\ENDWHILE 
		
		\IF{x=1}
		\STATE Update $\left[ \bar{\mathfrak{\Delta}} \right]^0 = \left[ \bar{\mathfrak{\Delta}} \right]^{s_1} $, and set $x=2, s_2 = 0$.
		\STATE Initialize  $\left[ \mathfrak{\Delta}_{t,2} \right]^{0}, t=1,...,T $, 
		based on $\{ \tilde{\delta}_{i,j,t} \} $ according to (\ref{eq_3_50}).
		\ENDIF
		
		\ENDFOR
		
		\STATE Set $\bar{\mathfrak{\Delta}} = \left[ \bar{\mathfrak{\Delta}} \right]^{s_2}$. 
	\end{algorithmic}
\end{algorithm}

Note that 
by constraining all $r_{i,j,t}$ with
$(i,j,t) \in \bar{\mathfrak{\Delta}}'_{x,\bar{m},t}$ determined by (\ref{eq_30_1}a) and (\ref{eq_31})
to $0$ at the same time,
it is assured that the constraints in (\ref{eq_3_7}b) are satisfied for both UAV or vessel $i^{(t,x)}_{\hat{m}_{x,\bar{m},t}}$ and $j^{(t,x)}_{\hat{m}_{x,\bar{m},t}}$ in time slot $t$ when $i^{(t,x)}_{\hat{m}_{x,\bar{m},t}} \geq 1$,
or for UAV or vessel $j^{(t,x)}_{\hat{m}_{x,\bar{m},t}}$ when $i^{(t,x)}_{\hat{m}_{x,\bar{m},t}} = 0$.
Further, the derivation of $ \bar{\mathfrak{\Delta}}_{x,\bar{m}}, \bar{m}=1,...,M_{\bar{T}_x,x}$, $x=1,2$,
in (\ref{eq_30}), can be efficiently carried out by determining $\bar{\mathfrak{\Delta}}'_{x,\bar{m},t}$
based on a reverse sequence with respect to $t$, i.e., from $\bar{\mathfrak{\Delta}}'_{x,\bar{m},\bar{T}_x}$
to $\bar{\mathfrak{\Delta}}'_{x,\bar{m},1}$,
through repeatedly solving the problem in~(\ref{eq_3_17}) with the extra constraints given in~(\ref{eq_33}).
Besides, $\left[ \bar{\mathfrak{\Delta}} \right]^{s_x}$ in Algorithm~\ref{Iterative_scheme}
is used to accumulate the extra \emph{forced-to-zero} constraints on $r_{i,j,t}$, as shown in~(\ref{eq_3_53}),
along with the iterations,
so as to ensure that the solution space for the relaxed problem~(\ref{eq_3_14}) can be continuingly shrinked
towards that of the original problem~(\ref{eq_3_7}).

Besides, to solve 
the problem (\ref{eq_3_17}) with a group of extra constraints determined by any set $\bar{\mathfrak{\Delta}}$ as
\begin{align}{\label{eq_3_30}}
	r_{i,j,t} = 0, ~~(i,j,t) \in \bar{\mathfrak{\Delta}},
\end{align}
we can delete all the $r_{i,j,t}$ in (\ref{eq_3_30})
as well as the corresponding $z_{i,j,t}$
from its objective function and constraints,
and solve the following simplified problem as
\begin{subequations}{\label{eq_3_31}}
	\begin{align}
		&\!\!\!\!\!\!\!\!\!\!\!\! \mathop {\min }\limits_{{ \{ r_{i,j,t} \} }} \max_{ \{ z_{i,j,t} \} }  { f_{\bar{\mathfrak{\Delta}}} \left(  \{r_{i,j,t} \},  \{z_{i,j,t} \} \right) } \\
		{s.t.} \;\; &S^{r(\bar{\mathfrak{\Delta}})}_{j,t} \le 1,  ~~1\leq j \leq I+J, 1\leq t \leq T, \\
		\;\;\;\;\;\; &\underset{{(i,j,t) \not\in \bar{\mathfrak{\Delta}}}} {\sum\limits_{i=0}^{I+J'} \sum\limits_{j=1}^{I+J}}   { \frac{{r_{i,j,t}}}{ R_{i,j,t} } }  \le N, ~~1\leq t \leq T, \\
		\;\;\;\;\;\; &\left. {{\tilde{V}_{j,t}}}^{(\bar{\mathfrak{\Delta}})} \right|_{t \geq t_j^{{\rm{QoS}}}} \ge V_j^{\rm{QoS}}, ~~I+1 \leq j \leq I+J, \\
		\;\;\;\;\;\; &~{\tilde{V}_{j,t}}^{(\bar{\mathfrak{\Delta}})} \geq \sum\limits_{j'=1, {(j,j',t+1) \not\in \bar{\mathfrak{\Delta}}}}^{I+J} {{r_{j,j',t+1 }}} \Delta \tau, \\
		\;\;\;\;\;\; &~ 1\leq j \leq I+J',   0\leq t \leq T-1, \nonumber\\
		\;\;\;\;\;\; &~0 \leq r_{i,j,t} \leq R_{i,j,t}, {(i,j,t) \not\in \bar{\mathfrak{\Delta}}}, \\
		\;\;\;\;\;\; &~z_{i,j,t} \geq 1, {(i,j,t) \not\in \bar{\mathfrak{\Delta}}}.
	\end{align}
\end{subequations}
where $f_{\bar{\mathfrak{\Delta}}} \left(  \{r_{i,j,t} \},  \{z_{i,j,t} \} \right)$, $S^{r(\bar{\mathfrak{\Delta}})}_{j,t}$, and 
${\tilde{V}_{j,t}}^{(\bar{\mathfrak{\Delta}})}$
are respectively given by (44), (\ref{eq_3_31_1}), and (\ref{eq_3_31_2}).
\begin{figure*}
	\normalsize
	\setcounter{mytempeqncnt}{\value{equation}}
	\setcounter{equation}{43}
	\begin{equation}{\label{eqn_dbl_x}}
		{f \left( \{r_{i,j,t} \}, \{z_{i,j,t} \} \right) = \underset{{(i,j,t) \not\in \bar{\mathfrak{\Delta}}}} { \sum\limits_{t = 1}^T     \sum\limits_{i=0}^{I+J'}\sum\limits_{j = 1}^{I+J}  } {   \frac{ \sigma^2  }{ \beta_{i,j,t}   } \left( 2^{  \frac{1}{B_s} r_{i,j,t} + \log_2 e ( 1- z_{i,j,t}^{-1} )    } - z_{i,j,t} \right)    }  \Delta \tau      }.	
	\end{equation}	
	\setcounter{equation}{\value{mytempeqncnt}}
	\hrulefill
\end{figure*}
\setcounter{equation}{44}
\begin{equation}{\label{eq_3_31_1}}
	S^{r(\bar{\mathfrak{\Delta}})}_{j,t} = \left\{
	\begin{aligned}
		& \underset{{(i,j,t) \not\in \bar{\mathfrak{\Delta}}}} {\sum\limits_{i=0}^{I+J'}} { \frac{r _{i,j,t}}{ R_{i,j,t} } } + \underset{{(j,j',t) \not\in \bar{\mathfrak{\Delta}}}} {\sum\limits_{j'=1}^{I+J}} { \frac{r _{j,j',t}}{ R_{j,j',t} } }, \\
		&~~~~~~~~~~~~~~~~~~~~~~~~~~~~~j\in \{1,...,I+J'\}, \\
		& \underset{{(i,j,t) \not\in \bar{\mathfrak{\Delta}}}} {\sum\limits_{i=0}^{I+J'}} { \frac{r _{i,j,t}}{ R_{i,j,t} } }, ~~j \in \{I+J'+1,...,I+J\}.
	\end{aligned}
	\right.
\end{equation}
\begin{equation}{\label{eq_3_31_2}}
	\tilde{V}_{j,t}^{(\bar{\mathfrak{\Delta}})} = \left\{
	\begin{aligned}
		& \underset{{(i,j,t) \not\in \bar{\mathfrak{\Delta}}}} { \sum\limits_{i=0}^{I+J'}} {{r_{i,j, t }}} \Delta \tau,  ~~~t=1, \forall j, \\
		& \sum\limits_{\tau = 1}^t {\left( { \underset{{(i,j,\tau) \not\in \bar{\mathfrak{\Delta}}}} { \sum\limits_{i=0}^{I+J'}} {{r_{i,j,\tau }}} - \underset{{(j,j',\tau) \not\in \bar{\mathfrak{\Delta}}}} {\sum\limits_{j'=1}^{I+J}} {{r_{j,j',\tau }}} } \right)
			\Delta \tau }, \\
		&~~~2 \leq t \leq T, 1\leq j \leq I+J', \\
		& \underset{{(i,j,\tau) \not\in \bar{\mathfrak{\Delta}}}} {\sum\limits_{\tau = 1}^t  \sum\limits_{i=0}^{I+J'} } {{r_{i,j,\tau }}} 
		\Delta \tau ,\\
		&~~~2 \leq t \leq T, I+J'+1 \leq j \leq I+J.
	\end{aligned}
	\right.
\end{equation}

With the finally obtained $\bar{\mathfrak{\Delta}} $ by Algorithm~\ref{Iterative_scheme},
solve the Min-Max problem in (\ref{eq_3_17}) with a group of extra constraints as
\begin{equation}{\label{eq_3_20}}
	r_{i,j,t} = 0, ~~(i,j,t) \in \bar{\mathfrak{\Delta}},
\end{equation}
and denote the optimal solution as $\{ r^{op}_{i,j,t} \}$.
Further, set $\{ \delta^{op}_{i,j,t} \}$  as 
\begin{equation}{\label{eq_3_21}}
	\tilde{\delta}_{i,j,t} = \left\{
	\begin{aligned}
		& 1,  (i,j,t)  \not\in  \bar{\mathfrak{\Delta}} , \\
		& 0,  (i,j,t) \in  \bar{\mathfrak{\Delta}}.
	\end{aligned}
	\right.
\end{equation}
Then, $\{ r^{op}_{i,j,t} \}$ and $\{ \delta^{op}_{i,j,t} \}$ constitute 
a solution for the original joint link scheduling and rate adaptation problem shown in~(\ref{eq_3_7})
for the hybrid satellite-UAV-terrestrial MCN.

The following Theorem~\ref{theorem_converge} shows that 
the proposed joint link scheduling and rate adaptation scheme in Algorithm~\ref{Iterative_scheme}
always converges within $s \leq [2(I+J)-N]T$ iterations.
In later simulations, we will show that the proposed scheme achieves a comparable performance to the optimal solution
obtained via exhaustive search, while it can be effectively implemented in practice with much lower complexity.

\begin{theorem}\label{theorem_converge} 
The proposed joint link scheduling and rate adaptation scheme in Algorithm~\ref{Iterative_scheme}	
converges with a maximum iteration number of $[2(I+J)-N]T$,
with $s_1 \leq (I+J)T$ and $s_2 \leq [(I+J)-N]T$.
\end{theorem}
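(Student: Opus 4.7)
The plan is to establish convergence via a monotone potential argument: show that the cumulative forced-to-zero set $[\bar{\mathfrak{\Delta}}]^{s_x}$ maintained across lines 8 and 11 of Algorithm~\ref{Iterative_scheme} grows strictly in every iteration, and then upper-bound its terminal cardinality in each phase. First I would verify $[\bar{\mathfrak{\Delta}}]^{s_x} \subseteq [\bar{\mathfrak{\Delta}}]^{s_x+1}$ by unwinding the union on line 8, and then show that at least one \emph{new} triple $(i,j,t)$ is introduced at every pass. The key observation is causal: any $(i,j,t) \in [\bar{\mathfrak{\Delta}}]^{s_x}$ is forced to satisfy $r_{i,j,t} = 0$ when the Min-Max problem (\ref{eq_3_17}) is re-solved on line 12, so the corresponding $\tilde{\delta}_{i,j,t}$ becomes zero via (\ref{eq_3_11}); such a triple therefore cannot reappear in the freshly computed violation set $[\mathfrak{\Delta}_{t,x}]^{s_x+1}$ of line 13, and hence never re-enters $\bar{\mathfrak{\Delta}}'_{x,\bar{m}_x',t}$. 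The while-loop guard then guarantees that a truly new triple is injected whenever the loop body executes.

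Next I would execute the two counting arguments. For phase one ($x=1$), the half-duplex constraint (\ref{eq_3_7}b) has exactly $I+J$ inequalities per time slot, one per node that can act as transmitter or receiver; each phase-one iteration eliminates at least one offending node's conflict by zeroing all of its incident links through the quadruple $\cup_{k=1}^{4} \hat{\mathfrak{\Delta}}^{(k)}_{1,\bar{m}_1',t}$ defined in (\ref{eq_31}). Thus at most $(I+J)$ iterations per slot suffice, yielding $s_1 \leq (I+J)T$. For phase two ($x=2$), termination of phase one guarantees that (\ref{eq_3_7}b) holds throughout, which caps the number of active links per slot at $I+J$ (each of the $I+J$ receivers participates in at most one link). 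Constraint (\ref{eq_3_7}c) requires this count to be at most $N$, so the per-slot excess to be removed is bounded by $(I+J)-N$; since each phase-two iteration eliminates at least one link via the singleton rule (\ref{eq_30_1}b), one obtains $s_2 \leq [(I+J)-N]T$. Summing gives the claimed $[2(I+J)-N]T$.

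The main obstacle will be making the strict-progress claim rigorous in the presence of the cross-time-slot construction (\ref{eq_30}) and the nested minimization (\ref{eq_32})-(\ref{eq_3_55}). One must argue that the selected $\bar{\mathfrak{\Delta}}_{x,\bar{m}_x'}$ can never coincide with $[\bar{\mathfrak{\Delta}}]^{s_x}$ (which would stall the loop) and that chaining $\bar{\mathfrak{\Delta}}'_{x,\bar{m},t}$ across $t=1,\ldots,\bar{T}_x$ does not re-admit triples that were previously eliminated. Both reduce to the same causality principle articulated above: constraining a rate to zero in (\ref{eq_3_17}) forces the scheduling indicator to zero on re-solution, which excludes the corresponding triple from every subsequent violation set produced by (\ref{eq_3_50}). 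Once that point is formalized, the two counting bounds close the proof.
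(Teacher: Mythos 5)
Your proof takes essentially the same route as the paper's: both phases are handled by counting at most $(I+J)$ half-duplex constraints per slot for $x=1$ and at most $(I+J)-N$ excess active links per slot for $x=2$, arguing that each iteration resolves at least one, and summing to $[2(I+J)-N]T$. Your additional monotone-potential argument (that the cumulative forced-to-zero set grows strictly and zeroed triples cannot re-enter the violation sets) makes explicit the strict-progress claim that the paper simply asserts via $1 \leq \mathfrak{e}_{s_x,x} \leq T$, which is a worthwhile tightening but not a different approach.
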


\begin{proof}
	See Appendix B.
\end{proof}

\subsection{Complexity Analysis}
The complexity of the proposed scheme
is mainly caused by the iterative solving of the Min-Max problems in (\ref{eq_3_17}).
Denote the number of iterations needed for the implementation of Algorithm~\ref{Iterative_scheme}
for $x=1$ and $x=2$ as $\mathfrak{I}_1$ and $\mathfrak{I}_2$, respectively.
Based on Theorem~\ref{theorem_converge}, an upper bound for $\mathfrak{I}_1$, $\mathfrak{I}_2$ can be respectively written as 
\begin{subequations}{\label{eq_3_60}}
	\begin{align}
		&\mathfrak{I}_1 \leq \bar{\mathfrak{I}}_1  =  (I+J)T, \\
		&\mathfrak{I}_2 \leq \bar{\mathfrak{I}}_2  =  [(I+J)-N]T.
	\end{align}
\end{subequations}
Further, an upper bound for the number of problems solved in each iteration $s_x$, i.e., $N_{s,x}, x=1,2$, 
can be written as
\begin{subequations}{\label{eq_3_34}}
	\begin{align}
		&\!\!\!\!\! N_{s,1} \leq \bar{N}_{s,1} = (I+J)^2(T-1) (I+J) \nonumber \\
		&~~~~~~~~~~~~ =(I+J)^3(T-1), \\
		&\!\!\!\!\! N_{s,2} \leq \bar{N}_{s,2} = (I+J)(T-1)(I+J) \nonumber \\
		&~~~~~~~~~~~~ = (I+J)^2(T-1).
	\end{align}
\end{subequations}
Overall, the total number of problems that need to be solved, i.e., $\mathfrak{N}$, can be written as 
\begin{align}{\label{eq_3_61}}
	\mathfrak{N} &= \sum_{s_1=0}^{\mathfrak{I}_1-1} N_{s,1} + \sum_{s_2=0}^{\mathfrak{I}_2-1} N_{s,2} \leq 
	\sum_{s_1=0}^{\bar{\mathfrak{I}}_1-1} \bar{N}_{s,1} + \sum_{s_2=0}^{\bar{\mathfrak{I}}_2-1} \bar{N}_{s,2} \nonumber \\
	&  = (I+J)^2(T-1)T [ (I+J)^2 + I+J -N ] \nonumber \\
	& \overset{\Delta}{=}\bar{\mathfrak{N}}.
\end{align}

As (\ref{eq_3_17}) can be solved with polynomial complexity, 
the proposed scheme is also with a polynomial complexity according to~(\ref{eq_3_61}).
Actually, $\bar{\mathfrak{N}} $ is a quite loose upper bound for $\mathfrak{N}$. 
Simulation results show that for the network scenarios considered in Section IV, 
$\mathfrak{N}$ is less than $1/100$ of $\bar{\mathfrak{N}}$.

Thus, though suboptimal, 
the proposed joint link scheduling and rate adaptation scheme has a much lower complexity
compared to the optimal scheme realized via exhaustive search, 
for which a maximum computation complexity of $O\left( 2^{\left( I+J \right)^2T} \right)$ is required,
which rapidly becomes prohibitive in practice for relatively large $I+J$ and $T$.

\section{Simulation Results and Discussions}
In the simulations, we consider a MCN with $I=1$, $J=9$, $J'=8$, $T=10$, and $N=9$, deployed within a square area of 25km$^2$.
The on-shore BS is located on the edge of the area, and the vessels are assumed to sail along 
randomly-generated straight shipping lanes. 
A randomly sampled topology of the network is shown in Fig.~\ref{fig2r}.

\begin{figure}[t]
	\centering
	\includegraphics[width=8cm]{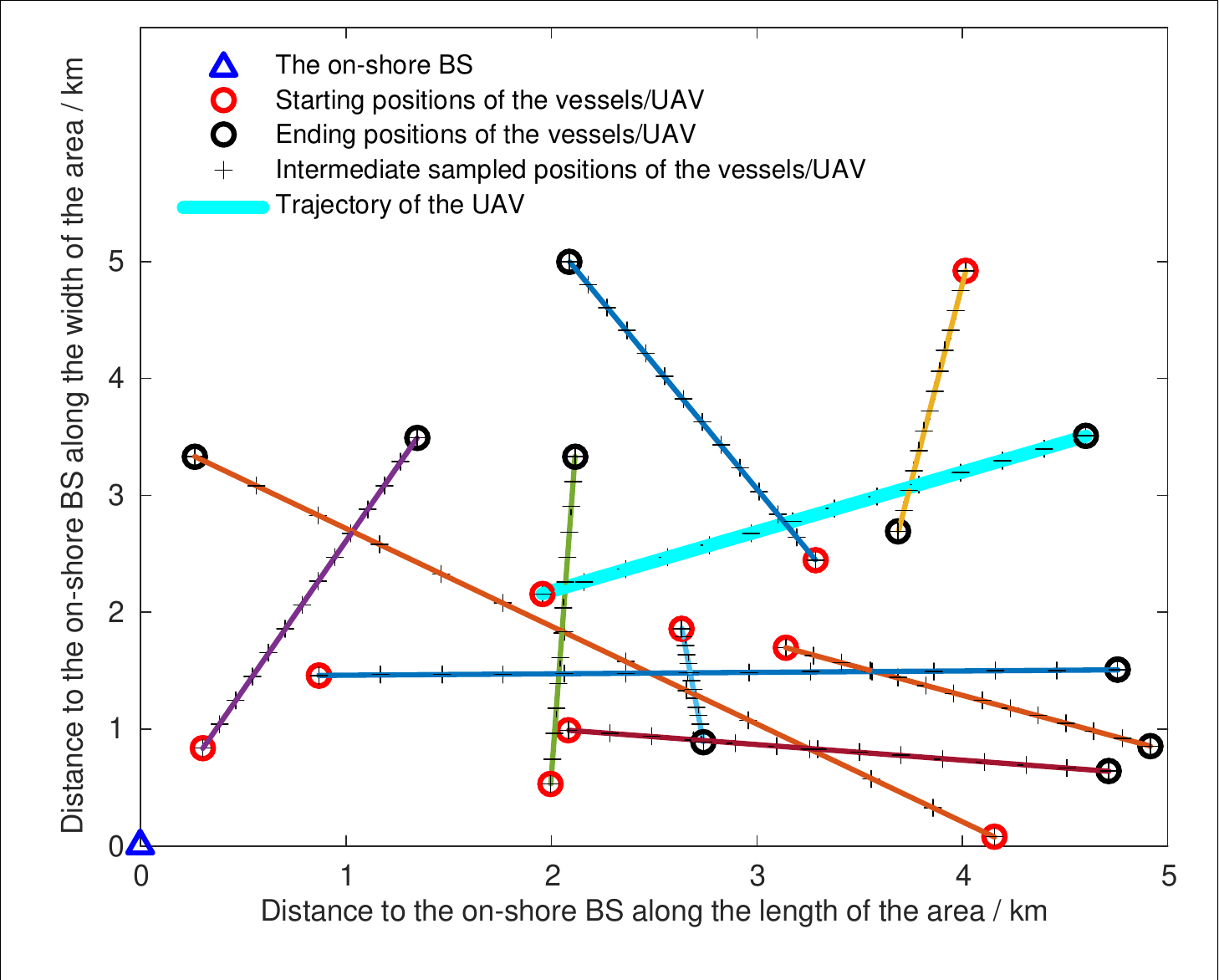}
	\caption{A randomly sampled topology of the hybrid MCN.}
	\label{fig2r}
\end{figure}

Each time slot lasts $\Delta \tau = 30$s, and the bandwidth of each subcarrier is $B_s = 1$MHz.
The height of the BS is $50$m, the height of the UAV is $100$m,
and the antenna heights of the vessels are all set as $5$m, i.e., $h_i = 5$m, $i=2,...,10$.
The carrier frequency is set as $f_c = 2$GHz, and $C=1$.
Considering that the UAV flies at a relatively low height, 
we set the channel parameters for the links from or to the UAV as
$a=5.0188$, $b=0.3511$, $\eta_{LOS}=2.3$, $\eta_{NLOS}=34$, respectively.
The maximum transmit power of the BS is $P_0 = 50$W, the maximum transmit power of the UAV is $P_1 = 10$W,
and that of each of the first $J'=8$ vessels is $P_i = 10$W, $i=2,...,9$.
The noise power is $\sigma^2 = -114$dBm.
Without loss of generality, the QoS guarantees are considered to be in proportion to the transmission rates of the BS to the vessels,
i.e, $V_j^{QoS} = \alpha \sum_{\tau=1}^T R_{0,j,\tau} \Delta \tau, j=2,...,10$, with $ 0 < \alpha \leq 1$.
$t_j^{QoS}$ is set as $t_j^{QoS} = T$ for $j=2,...,8$, and $t_j^{QoS} = T-1$ or $j=9,10$.

\subsection{Performance of the Proposed Scheme}

The energy consumption of the network 
with the proposed joint link scheduling and rate adaptation scheme for
$4$ different groups of QoS guarantees, i.e., $\alpha = 1/4, 1/3, 1/2$, $2/3$, is shown in Fig.~\ref{fig3}.
Note that the network energy consumption in Fig.~\ref{fig3} is obtained based on the average among that for $10$ randomly sampled 
topologies of the hybrid MCN. 
For comparison, we also show the performance of two other schemes.
The first one is the fixed transmission scheme, in which the BS transmits directly to the vessels
with the maximum transmit power $P_0$ in a certain number of time slots with relatively large $R_{i,j,t}$ 
so as to satisfy the QoS guarantees.
More specifically, the BS selects $T'_j$ time slots to transmit to vessel $j$, where $T'_j$ satisfies
$\sum_{y=1}^{T'_j} R_{0,j,\tau_y} \Delta \tau \geq V_j^{QoS}  $ and $\sum_{y=1}^{T'_j-1} R_{0,j,\tau_y} \Delta \tau < V_j^{QoS}  $,
with $R_{0,j,\tau_y}, y=1,...,T'_j$, being the largest $T'_j$ ones among $R_{0,j,t}, t=1,...,T$.
The second scheme is the rate-adaptation-only scheme developed when only the links between the BS and the vessels,
i.e., $0 \to j@t$, $j \in \{I+1,...,I+J\}, t \in \{1,...,T\}$, can be utilized,
by sovling the problem  (\ref{eq_3_14}) or (\ref{eq_3_17})  with the extra constraints as
$r_{i,j,t} = 0, i \in \{ 1,...,I+J' \}, j \in \{ 1,...,I+J \}, t\in \{1,...,T\}$.

\begin{figure}
	\centering
	\includegraphics[width=8cm]{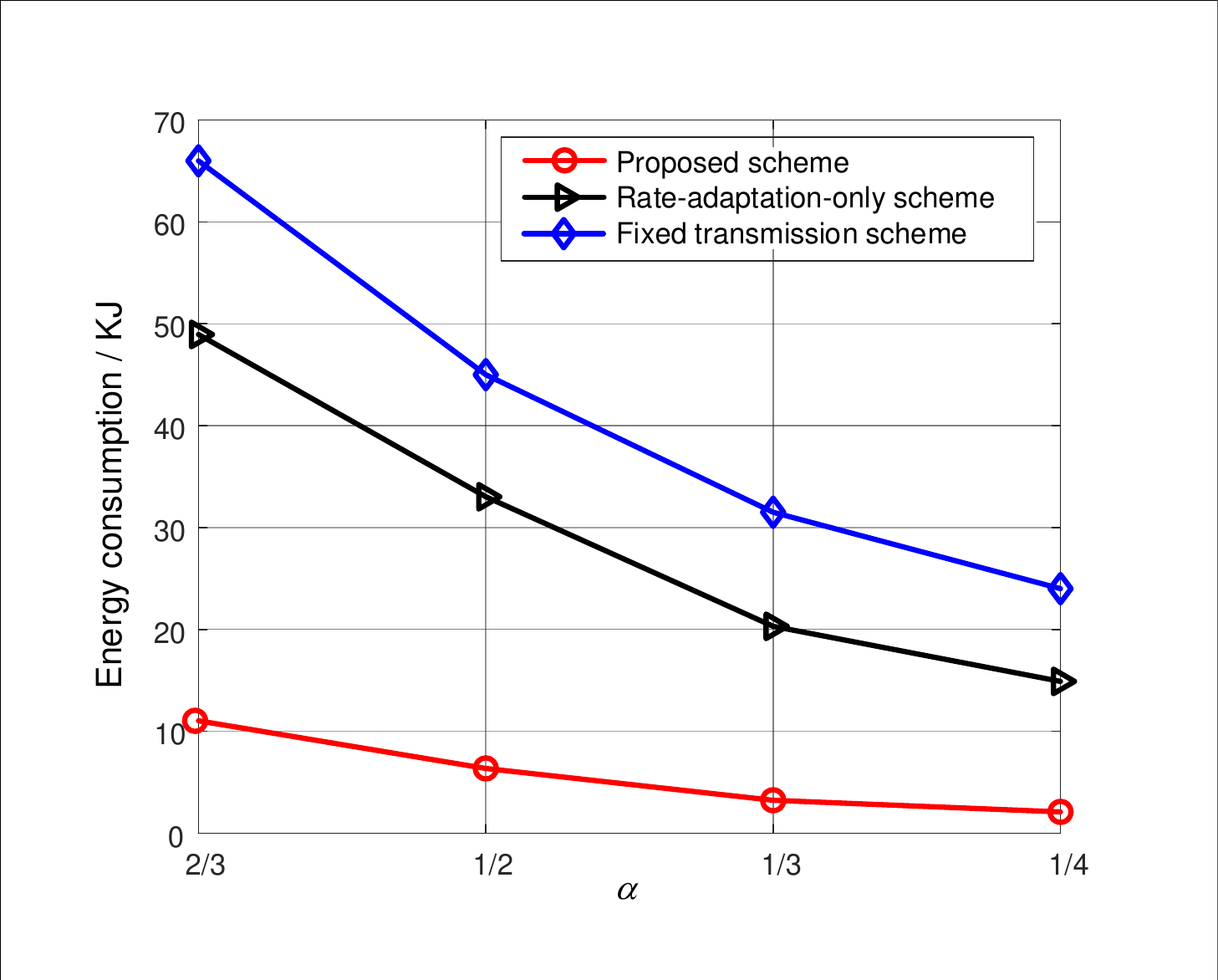}
	\caption{Energy consumption of the network under different schemes.}
	\label{fig3}
\end{figure}

Fig.~\ref{fig3} shows that  for $\alpha = 2/3$,
the network energy consumption is reduced by $83\%$ compared to the fixed transmission scheme and
$77\%$ compared to the rate-adaptation-only scheme.
The performance gains increase up to $86\%$ and $91\%$
for $\alpha = 1/4$.
It indicates that the energy consumption of the network can be prominently reduced with the proposed scheme.
Besides, the performance gain increases when $\alpha$ decreases. 
This is reasonable, as 
the smaller $\alpha$ is, the less time slots each vessel needs to occupy to satisfy its QoS guarantees.
In this case, the network has more potentials to improve the energy efficiency.

The complexity of the proposed scheme is illustrated in Fig.~\ref{fig5}
for the $10$ randomly sampled network topologies, respectively, in terms of the number of problems solved in Algorithm~\ref{Iterative_scheme}, i.e., $\mathfrak{N}$.
It can be seen that the proposed scheme has a rather low computation complexity
compared to the optimal scheme, the maximum complexity of which is $O\left( 2^{\left( I+J \right)^2T} \right)$,
i.e., $O\left( 2^{ 1000} \right)$ in the considered network.
Further, in all the $10$ randomly sampled topologies, $\mathfrak{N}$
is less than $1\%$  of $\bar{\mathfrak{N}} =  (I+J)^2(T-1)T [ (I+J)^2 + I+J -N ]$.
That is that $\bar{\mathfrak{N}} $ is a quite loose upper bound for $\mathfrak{N}$.

\begin{figure}
	\centering
	\includegraphics[width=8cm]{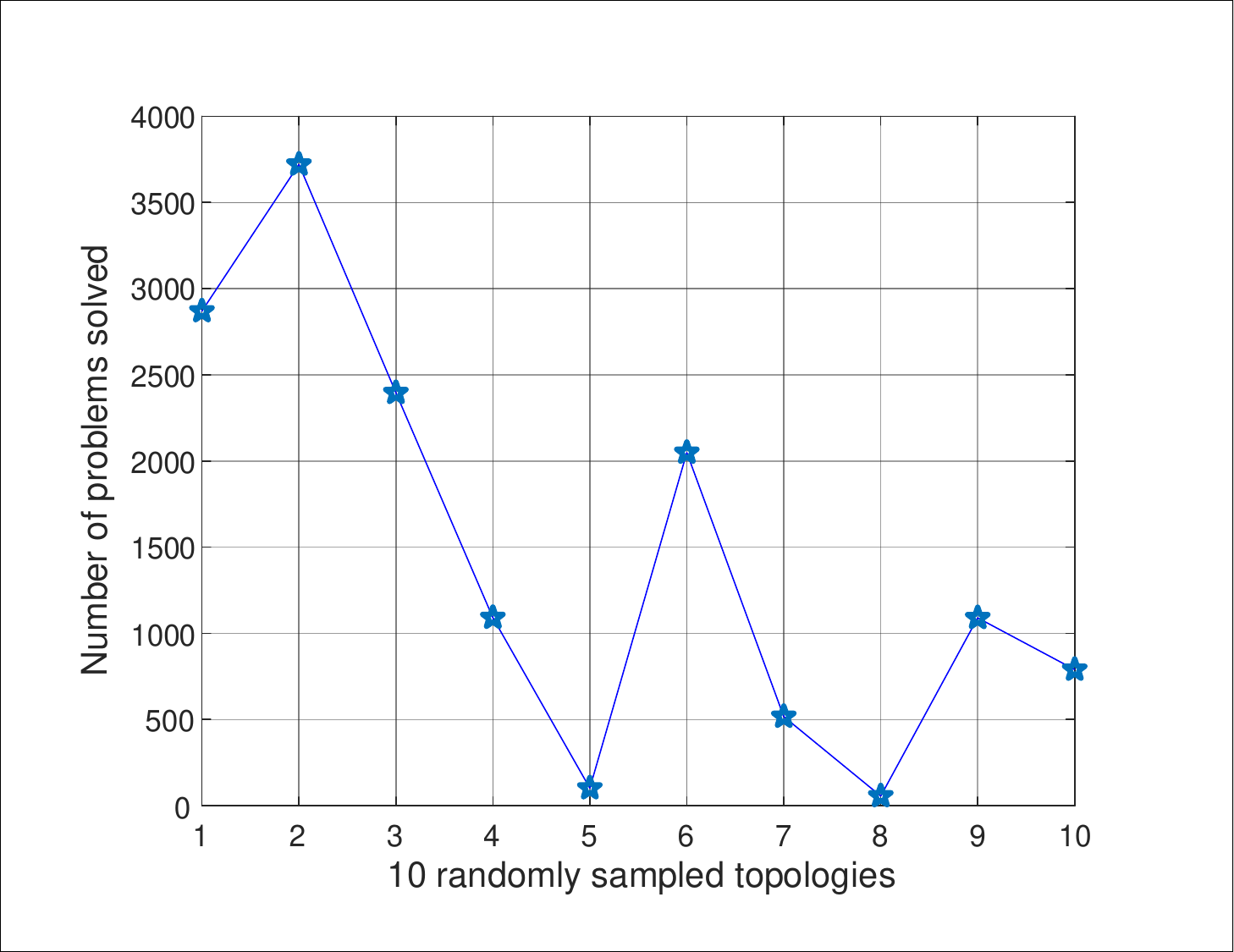}
	\caption{Total number of problems solved, i.e., $\mathfrak{N}$, for $10$ randomly sampled topologies.}
	\label{fig5}
\end{figure}

To show the convergence characteristic of the proposed scheme, 
the variation of the number of problems solved in each iteration
along with the implementation process of the scheme, for the randomly sampled topology shown in Fig.~\ref{fig2r}, 
is shown in  Fig.~\ref{fig4}.
Note that $N_s = N_{s,1}$ when $ 0 \leq s \leq \bar{s}_1-1$, and $N_s = N_{s-\bar{s}_1,2}$ when $ s \geq \bar{s}_1$,
where $\bar{s}_1$ is the value of $s_1$ after the iteration for $x=1$ is ended.
It indicates that the proposed joint link scheduling and rate adaptation scheme converges rather quickly
for different values of $\alpha$, i.e., different QoS guarantees for the vessels.

\begin{figure}
	\centering
	\includegraphics[width=8cm]{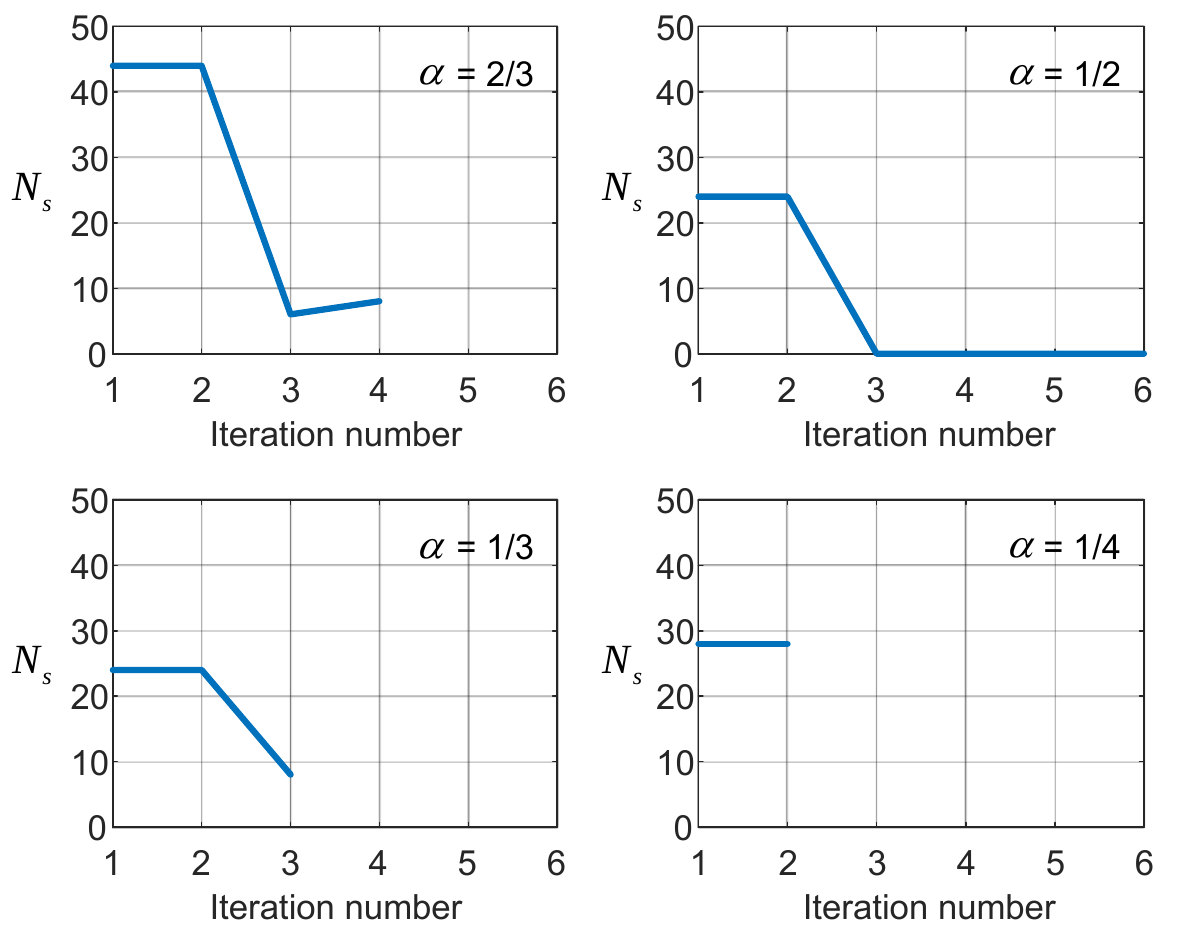}
	\caption{Variation of $N_s$ along with the iterations.}
	\label{fig4}
\end{figure}

\subsection{Comparison with the Optimal Solution}
In this subsection, we explore the performance gap between the proposed scheme and
the optimal solution,
which can be obtained via exhaustive search, 
however with much higher complexity that becomes prohibitive in practice, especially for a large network.
Considering that the implementation complexity of the optimal scheme is too high for our considered network scenario,
we turn to finding an upper bound for the optimal performance as a comparison benchmark for the proposed scheme.

Suppose $\{ r^{optml}_{i,j,t}  \}, \{ \delta^{optml}_{i,j,t}  \}$ is an optimal solution to the 
joint link scheduling and rate adaptation problem shown in~(\ref{eq_3_7}),
and $\{ r^{up}_{i,j,t}  \}$ is an optimal solution to the relaxed problem in~(\ref{eq_3_14}).
It is easy to know that 
\begin{align}{\label{eq_3_70}}
	\tilde{E}_{total} ( \{ r^{up}_{i,j,t}  \} ) \leq \tilde{E}_{total} ( \{  r^{optml}_{i,j,t} \} ).
\end{align}
Thus, we can take $\{ r^{up}_{i,j,t}  \}$, which is a solution corresponding to a super-optimal scheme, 
as the comparison benchmark in our exploring of the performance gap between our proposed scheme and the optimal solution.
Note that $\{ r^{up}_{i,j,t}  \}$ can be efficiently obtained by solving the equivalent Min-Max problem shown in~(\ref{eq_3_17}).

The energy consumptions of the network with the proposed scheme and $\{ r^{up}_{i,j,t}  \}$
averaged over the $10$ randomly sampled network topologies are compared in Fig.~\ref{fig6}.
It can be seen that the performance of the proposed scheme is quite close to that the super-optimal scheme,
with a gap of about $10\%$.
Correspondingly, it is concluded that the performance gap between the proposed scheme and the optimal solution must be even less. 
Therefore, the proposed scheme provides an effective suboptimal solution
to the joint link scheduling and rate adaptation for the hybrid MCN.

\begin{figure}[t]
	\centering
	\includegraphics[width=8cm]{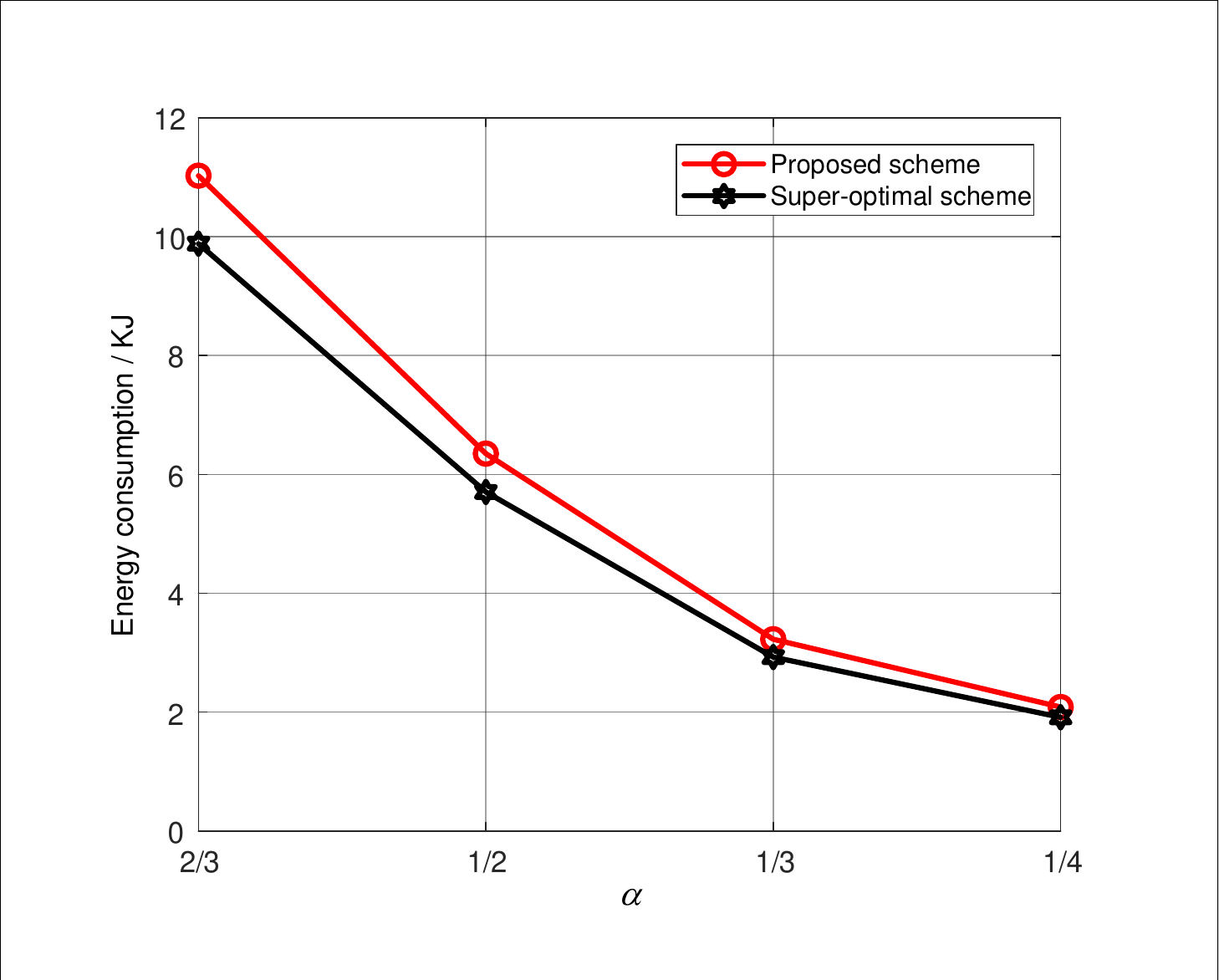}
	\caption{Illustration of the suboptimality of the proposed scheme.}
	\label{fig6}
\end{figure}

\begin{figure} [t]
	\centering
	\includegraphics[width=8cm]{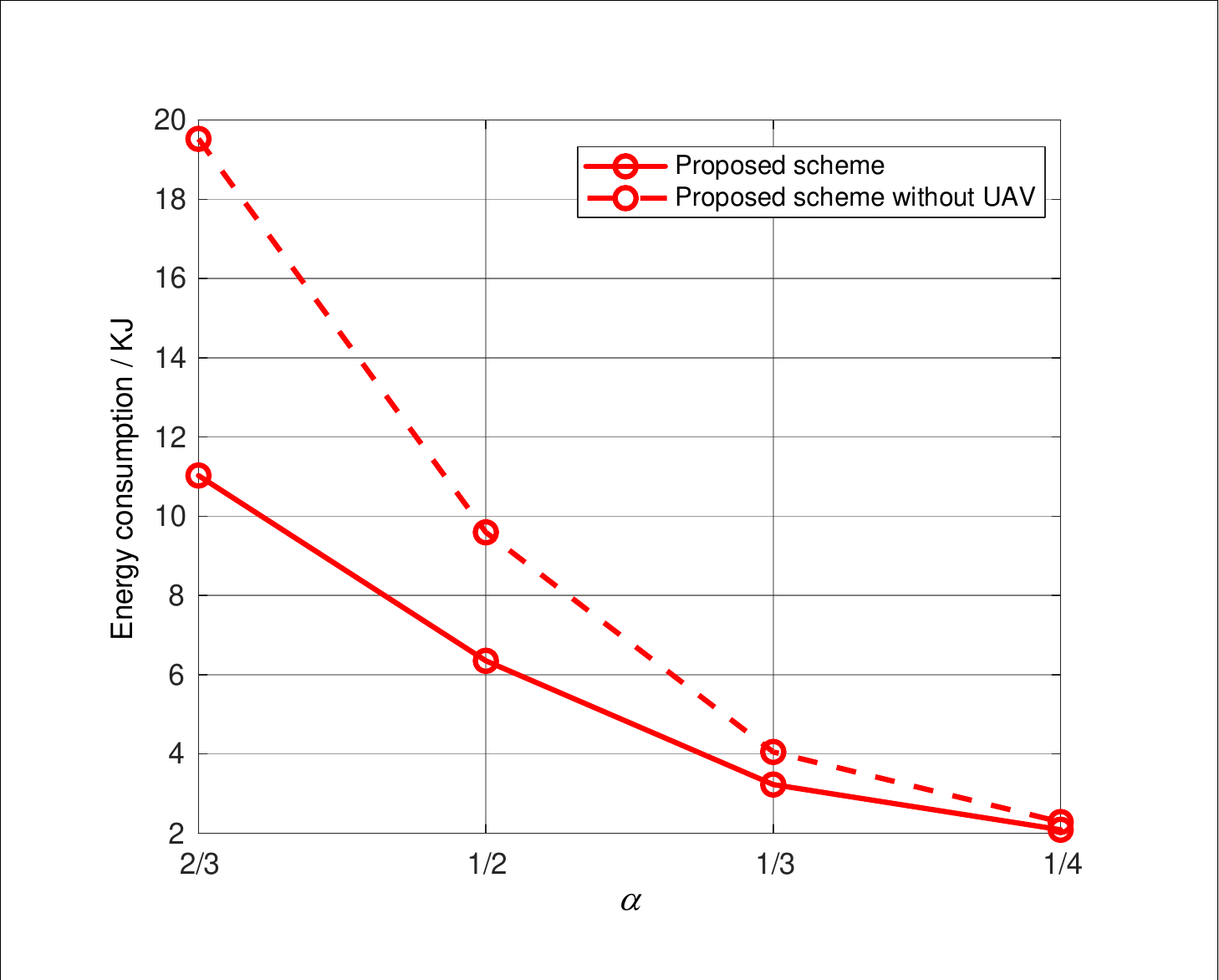}
	\caption{Illustration of the contribution of UAVs on the network energy efficiency.}
	\label{fig7}
\end{figure}

\subsection{Network Performance with and without UAVs}
The maritime UAVs play a critical role for both energy efficiency improvement and coverage enhancement in the MCN.
In this subsection, we try to explore the performance gain that can be achieved by UAVs with the proposed joint 
link scheduling and rate adaptation scheme for the hybrid MCN. 
Specifically, the energy consumption of the MCN with UAVs and that without UAVs are firstly compared,
for the illustration of the performance gain in energy efficiency.
Then, 
the minimum data volume received by the vessels within the service duration, 
i.e., $\mathbb{V}_{min} = \min_{j \in \{ I+1,I+J \}} V_{j,T} $, 
is compared for two different network configurations based on the proposed algorithm,
so as to demonstrate the potentials of UAVs with the proposed scheme in coverage enhancement.

For the energy consumption comparison,
the simulations are conducted under the same $10$ random network topologies described previously.
 As shown by Fig.~\ref{fig7},
the energy consumption of the UAV-aided network is reduced by $78\%$ for $\alpha = 2/3$ and $10\%$ for $\alpha=1/4$,
as compared to that without UAV relaying the data.
It indicates that the UAV is quite helpful for the reduction of the network energy consumption,
especially when large data volumes are required by the vessels.

	As to the demonstration of the potentials in coverage enhancement, 	
	the two network configurations considered are designated as C1 and C2, respectively.
	C1 refers to the UAV-aided MCN with the proposed scheme,
	and C2 represents the MCN in which the fixed transmission scheme illustrated in Subsection A is implemented.
	Note that as the on-shore BS transmits to the vessels directly for the fixed transmission scheme, the UAV just keeps idle in C2.
	The simulations are carried out under a group of $10$ semi-random network topologies with some ``coverage hole".
	Specifically, the shipping lanes of the vessels corresponding to $j=7, 8, 9, 10$ are restricted within the subarea where
	the transmission distance from the on-shore BS is larger than $5$km to form a ``coverage hole",
	while that of the other vessels are still randomly-generated across the whole square area.
	Besides, a semi-random trajectory is generated for the UAV in each topology, 
	the starting point of which is restricted within the subarea with a 
	transmission distance less than $3$km with respect to the BS, and the ending point
	is randomly generated in the same subarea where the shipping lanes of the vessels with $j=7, 8, 9, 10$ are located, 
	with an expectation that the UAV can help to reduce the ``coverage hole".

\begin{figure} [t]
	\centering
	\includegraphics[width=8cm]{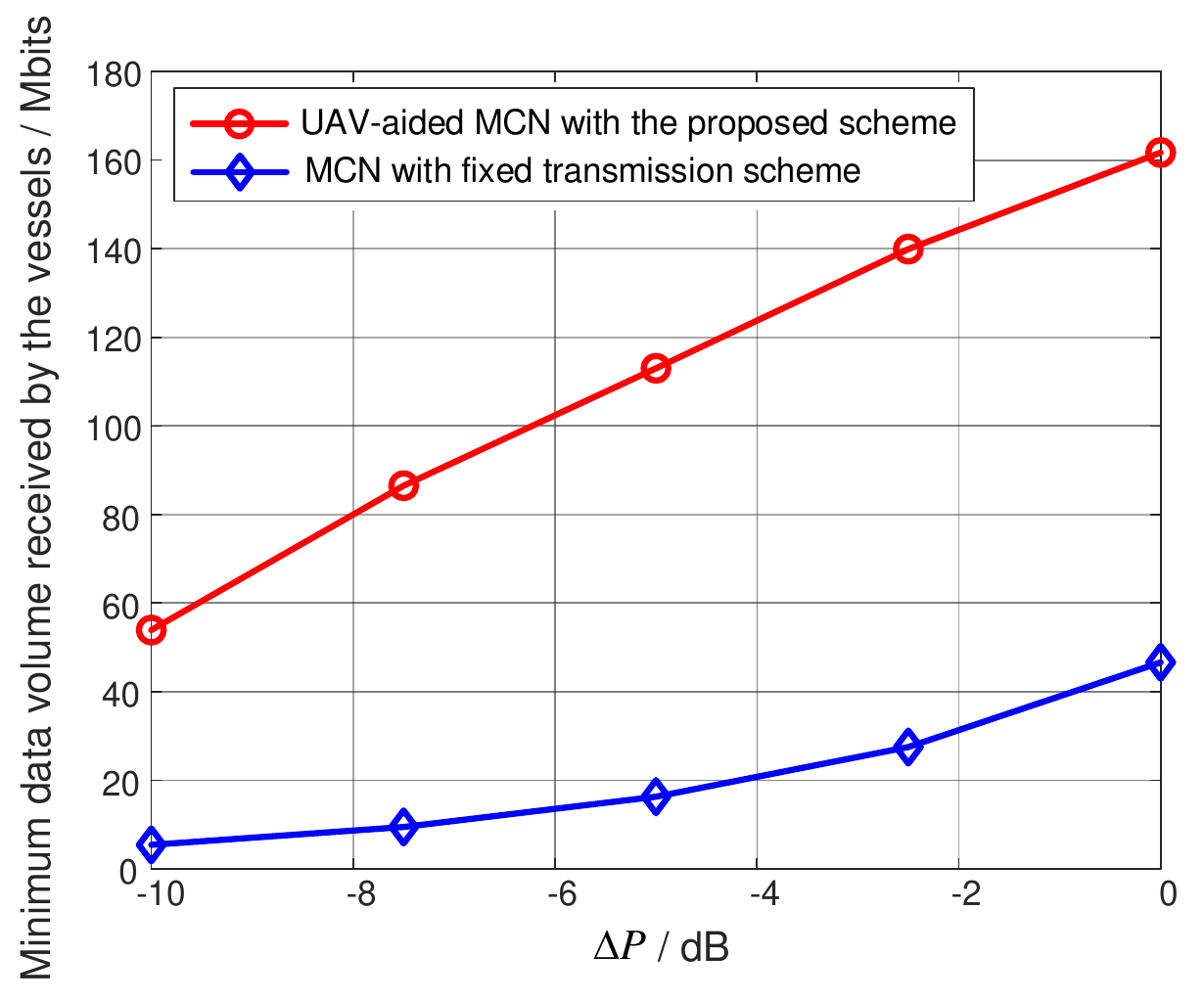}
	\caption{Illustration of the potentials of UAVs with the proposed scheme in coverage enhancement.}
	\label{fig8}
\end{figure}

	The minimum data volumes received by the vessels under C1 and C2, denoted as $\mathbb{V}_{min}^{(1)}$ and $\mathbb{V}_{min}^{(2)}$ respectively,
	are presented in Fig.~\ref{fig8}. The results are obtained by an average of $10$ semi-random network topologies.
	Different transmit powers for the on-shore BS as well as the UAV and the vessels are considered to enrich the comparison.
	Specifically, the transmit power of the on-shore BS and that of the UAV and the vessels
	are respectively set as $P'_0 = P_0 (\text{dB}) + \Delta P$ and $P'_i = P_i (\text{dB}) + \Delta P$, $1 \leq i \leq 9$,
	with $P_0 =50$W, $P_i = 10$W, $1 \leq i \leq 9$, and $\Delta P$ taking value from $-10$dB to $0$dB. 
	Note that one  primary indication for a better coverage is that a larger $\mathbb{V}_{min}$ 
	could be achieved.
	To show the potentials of UAVs with the proposed scheme in coverage enhancement, 
	$\mathbb{V}_{min}^{(1)}$ is obtained 
	based on Algorithm~\ref{Iterative_scheme} via a bisection searching within $[\mathcal{V}_b, ~\alpha_b \mathcal{V}_b]$,
	where $ \mathcal{V}_b = \min_{j} \sum_{t=1}^T R_{0,j,t} \Delta \tau$.
	Specifically, the maximum feasible $V_j^{QoS} \in [\mathcal{V}_b, ~\alpha_b \mathcal{V}_b]$, $j=I+1,...,I+J$, 
	with $t_j^{QoS} = T$, $j=I+1,...,I+J$,
	is searched based on C1 for each of the $10$ topologies,
	and its average with respect to the network topologies
	is shown in Fig.~\ref{fig8} as a benchmark value for $\mathbb{V}_{min}^{(1)}$ in the comparison.
	With a moderate complexity and without loss of generality, $\alpha_b = 10$ is adopted in the simulations.\footnote{Note that as an upper bound $10$ is set for $\alpha_b$, $\mathbb{V}_{min}^{(1)} = 10\mathbb{V}_{min}^{(2)}$ in Fig.~\ref{fig8} may indicate that $\mathbb{V}_{min}^{(1)} > 10\mathbb{V}_{min}^{(2)}$.}
	While for $\mathbb{V}_{min}^{(2)}$, it is simply set as the average of $\mathcal{V}_b$ in the $10$ topologies.  
	It can be seen from Fig.~\ref{fig8} that $\mathbb{V}_{min}^{(1)}$ is much larger than $\mathbb{V}_{min}^{(2)}$,
	i.e., C1 could achieve a noticeable gain in coverage enhancement compared to C2, just as expected.
	Furthermore, while the average performance gain of C1 over C2, i.e., $\mathbb{V}_{min}^{(1)}/\mathbb{V}_{min}^{(2)}$, 
	is about $3.5$ for $\Delta P = 0$dB, it is $10$ for $\Delta P = -10$dB.
	It indicates that maritime UAVs with the proposed scheme have greater potentials in coverage enhancement for MCNs 
	with more severe coverage problems.

\section{Conclusions}
In this paper, we focused on the coverage extension problem of 5G on the ocean. For higher benefit-to-cost ratio, we have proposed to orchestrate 
hierarchical links for a hybrid satellite-UAV-terrestrial
MCN, using only the slowly-varying large-scale CSI, which can be obtained easily according to the trajectories of UAVs and the shipping lanes of vessels.  
A MINLP problem has been formulated for the minimization of the network energy consumption
with QoS guarantees for all users. 
To tackle the NP-hard problem,
a suboptimal scheme has been proposed based on a relaxation and gradually-approaching method following the gentlest-descent principle,
as well as an equivalent Min-Max transformation. 
Simulations results have shown that the proposed process-oriented joint link scheduling and rate adaptation scheme converges quickly, and has a polynomial computation complexity.
Furthermore, it can achieve a prominent performance close to the optimal solution, 
which, however, cannot be easily implemented in practice due to its prohibitive complexity.
By adopting the proposed scheme, all users can share 
an agile on-demand coverage, and the total energy consumption can be greatly reduced, making it a promising technique to be applied in future greener and more affordable MCNs using both fixed and
deployable infrastructures. Based on the results of this work, more systematic designs incorporating with e.g., UAV trajectory optimization, online dynamic adjustment, and more critical QoS requirements, could be uncovered in future works.

\appendices
\section{Proof of Theorem~\ref{theorem_min}}
The first-order and the second-order partial derivative of $f\left(  \{r_{i,j,t} \},  \{z_{i,j,t} \} \right)$ with respect to $z_{i,j,t}, \forall i,j,t$, can be respectively derived as
\begin{align}\label{eq_eA_1}
	&\frac{ \partial f\left(  \{r_{i,j,t} \},  \{z_{i,j,t} \} \right) }{ \partial z_{i,j,t} }  \nonumber \\
	& = \frac{\sigma^2 \Delta \tau}{\beta_{i,j,t}} \left(  \frac{ 2^{ \frac{1}{B_s} r_{i,j,t} + \log_2e(1-z_{i,j,t}^{-1})}}{z_{i,j,t}^2} - 1  \right),
\end{align} 
\begin{align}\label{eq_eA_2}
	&\frac{ \partial^2 f\left(  \{r_{i,j,t} \},  \{z_{i,j,t} \} \right) }{ \partial^2 z_{i,j,t} }  \nonumber \\
	& = \frac{\sigma^2 \Delta \tau 2^{\frac{1}{B_s} r_{i,j,t} + \log_2e(1-z_{i,j,t}^{-1})} }{\beta_{i,j,t} z_{i,j,t}^3 } \left(  \frac{ 1}{z_{i,j,t}} - 2  \right).
\end{align} 

By substituting the expression of $r_{i,j,t}$ in (\ref{eq_3_4}) into $2^{ \frac{1}{B_s} r_{i,j,t} + \log_2e(1-z_{i,j,t}^{-1})}$ 
and setting $z_{i,j,t} = W_{i,j,t}$, we get
\begin{align}\label{eq_eA_3}
	&~ 2^{ \frac{1}{B_s} r_{i,j,t} + \log_2e(1-W_{i,j,t}^{-1})} \nonumber \\
	&= 2^{ \log_2 \left( 1+ \frac{ \beta _{i,j,t } p_{i,j,t} W_{i,j,t}^{-1} }{ \sigma ^2 }  \right) + \log_2(W_{i,j,t}) } \nonumber \\
	&= \frac{ \sigma ^2 + \beta _{i,j,t } p_{i,j,t} W_{i,j,t}^{-1} }{ \sigma ^2 }  W_{i,j,t}.
\end{align}
Based on (\ref{eq_3_5}), it is known that
\begin{eqnarray}\label{eq_eA_4}
	1 = W_{i,j,t}^{-1} + \frac{ \beta _{i,j,t } p_{i,j,t} W_{i,j,t}^{-1} }{ \sigma ^2 + \beta _{i,j,t } p_{i,j,t} W_{i,j,t}^{-1} }.
\end{eqnarray} 
(\ref{eq_eA_4}) indicates that 
\begin{eqnarray}\label{eq_eA_5}
	W_{i,j,t}^{-1} = \frac{ \sigma ^2 }{ \sigma ^2 + \beta _{i,j,t } p_{i,j,t} W_{i,j,t}^{-1} },
\end{eqnarray} 
and further
\begin{eqnarray}\label{eq_eA_6}
	W_{i,j,t} = \frac{ \sigma ^2 + \beta _{i,j,t } p_{i,j,t} W_{i,j,t}^{-1} }{\sigma ^2  }.
\end{eqnarray} 
Substitute (\ref{eq_eA_6}) into (\ref{eq_eA_3}), and it is obtained that
\begin{eqnarray}\label{eq_eA_7}
	2^{ \frac{1}{B_s} r_{i,j,t} + \log_2e(1-W_{i,j,t}^{-1})} = W_{i,j,t}^2.
\end{eqnarray} 

It can be seen from (\ref{eq_eA_1}) and (\ref{eq_eA_7}) that
\begin{eqnarray}\label{eq_eA_8} 
	\left. \frac{ \partial f\left(  \{r_{i,j,t} \},  \{z_{i,j,t} \} \right) }{ \partial z_{i,j,t} } \right|_{z_{i,j,t}= W_{i,j,t}} = 0.
\end{eqnarray} 
Furthermore, (\ref{eq_eA_2}) implies 
\begin{eqnarray}\label{eq_eA_9}
	\left. \frac{ \partial^2 f\left(  \{r_{i,j,t} \},  \{z_{i,j,t} \} \right) }{ \partial^2 z_{i,j,t} } \right|_{z_{i,j,t} \geq 1} < 0,
\end{eqnarray} 
which shows that $\frac{ \partial f\left(  \{r_{i,j,t} \},  \{z_{i,j,t} \} \right) }{ \partial z_{i,j,t} }$
is decreasing with respect to $z_{i,j,t}$ within $[ 1, +\infty)$.
Thus, 
\begin{eqnarray}\label{eq_eA_10} 
	\left. \frac{ \partial f\left(  \{r_{i,j,t} \},  \{z_{i,j,t} \} \right) }{ \partial z_{i,j,t} } \right|_{z_{i,j,t} < W_{i,j,t}} > 0,
\end{eqnarray} 
\begin{eqnarray}\label{eq_eA_11} 
	\left. \frac{ \partial f\left(  \{r_{i,j,t} \},  \{z_{i,j,t} \} \right) }{ \partial z_{i,j,t} } \right|_{z_{i,j,t} > W_{i,j,t}} < 0.
\end{eqnarray} 
Based on (19), (35), (\ref{eq_eA_8}), (\ref{eq_eA_10}), and (\ref{eq_eA_11}), it can be inferred that
\begin{eqnarray}\label{eq_eA_12} 
	\tilde{E}_{\rm{total}} \left( \{r_{i,j,t} \} \right) = \max_{z_{i,j,t} \geq 1, \forall i,j,t }  \,\, f\left(  \{r_{i,j,t} \},  \{z_{i,j,t} \} \right).
\end{eqnarray} 

Based on the convexity of the exponential function, it is easy to know that
$f\left(  \{r_{i,j,t} \},  \{z_{i,j,t} \} \right)$ is convex with respect to $\{r_{i,j,t}\}$ for $r_{i,j,t} \geq 0, \forall i,j,t$.
Besides, based on (\ref{eq_eA_9}), we know that
$f\left(  \{r_{i,j,t} \},  \{z_{i,j,t} \} \right)$ is concave with respect to $\{ z_{i,j,t} \}$ for $z_{i,j,t} \geq 1, \forall i,j,t$.

\section{Proof of Theorem~\ref{theorem_converge}}
Based on the illustration in Algorithm~\ref{Iterative_scheme},
it can be seen that the solution space for the relaxed problem~(\ref{eq_3_14}) is shrinked 
in each iteration $s_x$, $x=1,2$,
based on the process-oriented rule defined by~(\ref{eq_3_55})
and (\ref{eq_30})-(\ref{eq_34}).

It can be inferred that for $x=1$, $1 \leq \mathfrak{e}_{s_1,1} \leq T $ half-duplex mode constraints given in~(\ref{eq_3_7}b)
are assured to be transformed from the violated state to the satisfied state after each iteration $s_1$. 
To obtain the maximum iteration number, we consider the worst case, where
the constraints given in~(\ref{eq_3_7}b) are all violated
with the initial $\tilde{r}_{i,j,t}$ and $\tilde{\delta}_{i,j,t}$
obtained by solving the relaxed problem~(\ref{eq_3_14}).
Denote the number of iterations needed for $x=1$ in the worst case as $\bar{s}_1$,
and it should satisfy
\begin{eqnarray}\label{eq_eB_1} 
\sum_{s_1=0}^{\bar{s}_1-1} \mathfrak{e}_{s_1,1} \geq (I+J)T.
\end{eqnarray}
Thus, 
\begin{eqnarray}\label{eq_eB_2} 
\bar{s}_1 \leq (I+J)T.
\end{eqnarray} 
Furthermore, we have
\begin{eqnarray}\label{eq_eB_3} 
s_1 \leq \bar{s}_1 \leq  (I+J)T.
\end{eqnarray} 

After the iterations for $x=1$ terminate, at most $I+J$ links are scheduled to be active in time slot $t \in \{1,...,T\}$.
It indicates that at most $I+J-N$ extra active links should be removed, i.e., set idle, for each time slot $t$
so as to satisfy the subcarrier number constraints given in~(\ref{eq_3_7}c). 
That is that at most $(I+J-N)T$ extra active links in total should be set idle through the iterations for $x=2$.
With each iteration $s_2$, $ 1\leq \mathfrak{e}_{s_2,2} \leq T$ extra active links could be set idle. 
Thus, if we denote the number of iterations needed for $x=2$ in the worst case as $\bar{s}_2$, then
\begin{eqnarray}\label{eq_eB_4} 
\bar{s}_2 \leq  (I+J-N)T.
\end{eqnarray} 
Furthermore, we have 
\begin{eqnarray}\label{eq_eB_3} 
s_2 \leq \bar{s}_2 \leq  (I+J-N)T.
\end{eqnarray} 

Finally, based on~(\ref{eq_eB_2}) and~(\ref{eq_eB_4}), 
we can obtain the total number of iterations needed for the proposed joint link scheduling and rate adaptation scheme
in Algorithm~\ref{Iterative_scheme}, i.e., 
\begin{eqnarray}\label{eq_eB_5} 
s \leq \bar{s}_1 + \bar{s}_2 \leq  [2(I+J)-N]T.
\end{eqnarray}


\begin{thebibliography}{10}
		\providecommand{\url}[1]{#1} \csname url@rmstyle\endcsname
		\providecommand{\newblock}{\relax}
		\providecommand{\bibinfo}[2]{#2}
		\providecommand\BIBentrySTDinterwordspacing{\spaceskip=0pt\relax}
		\providecommand\BIBentryALTinterwordstretchfactor{4}
		\providecommand\BIBentryALTinterwordspacing{\spaceskip=\fontdimen2\font
			plus \BIBentryALTinterwordstretchfactor\fontdimen3\font minus
			\fontdimen4\font\relax}
		\providecommand\BIBforeignlanguage[2]{{%
				\expandafter\ifx\csname l@#1\endcsname\relax
				\typeout{** WARNING: IEEEtran.bst: No hyphenation pattern has been}%
				\typeout{** loaded for the language `#1'. Using the pattern for}%
				\typeout{** the default language instead.}%
				\else \language=\csname l@#1\endcsname \fi #2}}

\balance	
	
		\bibitem{icc2021}	
		Y. Wang, W. Feng, J. Wang, and T. Q. S. Quek, ``Joint link scheduling and rate adaptation for energy-efficient Internet of vessels,'' in {\em Proc. IEEE ICC}, Montreal, Quebec, Canada, Jun. 2021.
		
		\bibitem{p321}
		R. Campos, T. Oliveira, N. Cruz, A. Matos, and J. M. Almeida,
		``BLUECOM+: Cost-effective broadband communications in remote ocean areas,'' 
		\emph{in Proc. OCEANS}, pp.~1--6, Apr. 2016.
		
		\bibitem{p323}
		T. Wei, W. Feng, Y. Chen, C.-X. Wang, N. Ge, and J. Lu, ``Hybrid satellite-terrestrial
		communication networks for the maritime Internet of Things: key technologies, opportunities, and challenges,''
		\emph{IEEE Internet Things J.}, 2021, Early Access.
		
		\bibitem{p32}
		M. Zhou, V. D. Hoang, H. Harada, et al., ``TRITON: high-speed maritime wireless mesh network,''
		\emph{IEEE Wireless Commun.}, vol. 20, no. 5, pp.~134--142, Oct. 2013.
		
		\bibitem{r1}
		H. Saarnisaari, S. Dixit, M.-S. Alouini, et. al., ``A 6G white paper on connectivity for remote areas,'' \emph{6G Flagship White Papers}, arXiv:2004.14699, 2020.
		
		\bibitem{r2}
		X. Li, W. Feng, J. Wang, Y. Chen, N. Ge, and C.-X. Wang, ``Enabling 5G on the ocean: a hybrid satellite-UAV-terrestrial
		network solution,'' \emph{IEEE Wireless Commun.}, vol. 27, no. 6, pp.~116--121, Dec. 2020.
		
		\bibitem{r2_1}
		C. Liu, W. Feng, Y. Chen et al., ``Cell-free satellite-UAV networks for 6G wide-area Internet of Things,''
		\emph{IEEE J. Sel. Areas Commun.}, vol. 39, no. 4, pp.~1116--1131, Apr. 2021.
		
		\bibitem{r3}
		J. Wang et al., ``Wireless channel models for maritime communications,'' \emph{IEEE Access}, vol. 6, pp.~68070--68088, 2018.
		
		\bibitem{r4}
		Y. Huo, X. Dong, and S. Beatty, ``Cellular communications in ocean waves for maritime Internet of Things,''
		\emph{IEEE Int. Things J.}, vol. 7, no. 10, pp.~9965--9979, Oct. 2020.
		
		\bibitem{p404}
		C. Liu, W. Feng, T. Wei, and N. Ge, ``Fairness-oriented hybrid precoding for massive MIMO maritime downlink systems with large-scale CSIT,'' \emph{China Commun.}, vol. 15, no. 1, pp. 52-61, Jan. 2018.
		
		\bibitem{r6}
		T. Wei, W. Feng, J. Wang, N. Ge, and J. Lu, ``Exploiting the shipping lane information for energy-efficient maritime communications,'' \emph{IEEE Trans. Veh. Tech.}, vol. 68, no. 7, pp. 7204-7208, Jul. 2019.
		
		\bibitem{r5}
		Y. Kim, Y. Song, and S. H. Lim, ``Hierarchical maritime radio networks for Internet of maritime Things,'' \emph{IEEE Access}, vol. 7, pp.~54218--54227, 2019.
		
		\bibitem{r7}
		S. Jo and W. Shim, ``LTE-maritime: high-speed maritime wireless communication based on LTE technology,'' \emph{IEEE Access}, vol. 7, pp.~53172--53181, 2019.
		
		\bibitem{r8}
		K. A. Yau, A. R. Syed, W. Hashim, J. Qadir, C. Wu, and N. Hassan, ``Maritime networking: bringing Internet to the sea,'' \emph{IEEE Access}, vol. 7, pp.~48236--48255, 2019.
		
		\bibitem{r9}
		F. B. Teixeira, R. Campos, and M. Ricardo, ``Height optimization in aerial networks for enhanced broadband communications at sea,'' \emph{IEEE Access}, vol. 8, pp.~28311--28323, 2020.
		
		\bibitem{ref_Li}
		X. Li, W. Feng, Y. Chen, C.-X. Wang, and N. Ge, ``Maritime coverage
		enhancement using UAVs coordinated with hybrid satellite-terrestrial networks,''
		\emph{IEEE Trans. Commun.}, vol. 68, no. 4, pp. 2355--2369, Apr. 2020.
		
		\bibitem{r10}
		T. Yang, Z. Jiang, R. Sun, N. Cheng, and H. Feng, ``Maritime search and rescue based on group mobile computing for unmanned aerial vehicles and unmanned surface vehicles,''
		\emph{IEEE Trans. Industrial Informatics}, vol. 16, no. 12, pp. 7700--7708, Dec. 2020.	
		
		\bibitem{r11}
		Y. Zeng, R. Zhang, and T. J. Lim, ``Wireless communications with unmanned aerial vehicles: opportunities and
		challenges,''\emph{ IEEE Commun. Mag.}, vol. 54, no. 5, pp. 36-42, May 2016.
		
		\bibitem{r12}
		Y. Zeng, J. Lyu, and R. Zhang, ``Cellular-connected UAV: potential, challenges, and promising technologies,''\emph{ IEEE Wireless Commun.}, vol. 26, no. 1, pp. 120-127, Feb. 2019.
		
		\bibitem{r1-add} 
		Y. Sun, D. Xu, D. W. K. Ng, L. Dai, and R. Schober, ``Optimal 3D-trajectory design and resource allocation for solar-powered UAV communication systems,''
		\emph{IEEE Trans. Commun.}, vol. 67, no. 6, pp. 4281--4298, Jun. 2019.
	    
		
		\bibitem{r2-add} 
		Y. Cai, Z. Wei, R. Li, D. W. K. Ng, and J. Yuan, ``Joint trajectory and resource allocation design for energy-efficient secure UAV communication systems,''
		\emph{IEEE Trans. Commun.}, vol. 68, no. 7, pp. 4536-4553, Jul. 2020.


		\bibitem{r14-2}
		Y. Zeng and R. Zhang, ``Energy-efficient UAV communication with trajectory optimization,''\emph{ IEEE Trans. Wireless Commun.}, vol. 16, no. 6, pp. 3747-3760, Jun. 2017.
		
		\bibitem{r13}
		D. Yang, Q. Wu, Y. Zeng, and R. Zhang, ``Energy tradeoff in ground-to-UAV communication via trajectory design,''\emph{ IEEE Trans. Veh. Tech.}, vol. 67, no. 7, pp. 6721-6726, Jul. 2018.
		

		\bibitem{r16}
		J. Lyu, Y. Zeng, R. Zhang, and T. J. Lim, ``Placement optimization of UAV-mounted mobile base stations,'' \emph{IEEE Commun. Lett.}, vol. 21, no. 3, pp. 604-607, Mar. 2017.
		
		\bibitem{r17}
		Q. Wu, Y. Zeng, and R. Zhang, ``Joint trajectory and communication design for multi-UAV enabled wireless networks,'' \emph{IEEE Trans. Wireless Commun.}, vol. 17, no. 3, pp. 2109-2121, Mar. 2018.

		\bibitem{r14}
		Y. Zeng, R. Zhang, and T. J. Lim, ``Throughput maximization for UAV-enabled mobile relaying systems,'' \emph{IEEE Trans. Commun.}, vol. 64, no. 12, pp. 4983-4996, Dec. 2016.

		\bibitem{r15}
		Y. Chen, W. Feng, and G. Zheng, ``Optimum placement of UAV as relays,'' \emph{IEEE Commun. Lett.}, vol. 22, no. 2, pp. 248-251, Feb. 2018.		
		
		\bibitem{ROT-2018}
		S. Bulusu, N. B. Mehta, and	S. Kalyanasundaram, ``Rate adaptation, scheduling, and mode selection in D2D systems with partial channel knowledge,''
		\emph{IEEE Trans. Wireless Commun.}, vol. 17, no. 2, pp. 1053-1065, Feb. 2018.

		
		\bibitem{ROT-2015} 
		X. Lin, R. W. Heath, and J. G. Andrews, ``The interplay between massive MIMO and underlaid D2D networking,'' 
		\emph{IEEE Trans. Wireless Commun.}, vol. 14, no. 6, pp. 3337-3351, Jun. 2015. 

	
		\bibitem{ref_channel_1}
		A. Al-Hourani, S. Kandeepan, and S. Lardner, ``Optimal LAP altitude for maximum coverage,'' \emph{IEEE Wireless Commun. Lett.}, vol. 3, no. 6, pp. 569–572, Dec. 2014.
		
		\bibitem{JSAC2013}
		W. Feng, Y. Wang, N. Ge, J. Lu, and J. Zhang, ``Virtual MIMO in multi-cell distributed antenna systems: coordinated transmissions with large-scale CSIT,'' \emph{IEEE J. Sel. Areas Commun.}, vol. 31, no. 10, pp. 2067-2081, Oct. 2013.
		
		
		\bibitem{NP}
		S. Gueye, S. Michel, and A. Yassine, ``A 0-1 linear programming formulation for the berth assignment problem,''
		 \emph{in Proc. Intern. Conf. Logistics.}, pp.~50--54, May 2011.
		
		\bibitem{SaddlePoint_1}
		S. V. Rzhevskii, ``Monotone $\varepsilon$-subgradient method for finding saddle points of convex-concave functions,'' \emph{Cybernetics \& Systems Analysis}, vol. 29, no. 4, pp. 546-555, 1993.
		
		\bibitem{SaddlePoint_2}
		Arkadi Nemirovski, ``Prox-method with rate of convergence O(1=t) for variational inequalities with lipschitz continuous
		monotone operators and smooth convex-concave saddle point problems,'' \emph{SIAM Journal on Optimization}, vol. 15, no. 1, pp. 229-251,
		2004.		
		
		\bibitem{SaddlePoint_3}
		E. Y. Hamedani, A. Jalilzadeh, N. S. Aybat, U. V. Shanbhag, ``Iteration complexity of randomized primal-dual methods for convex-concave saddle point problems,'' arXiv:1806.04118v2, 2018.
		
	\end{thebibliography}
\end{document}